\title{\textbf{Intrinsic Heisenberg-type lower bounds on spacelike hypersurfaces in general relativity}}
\author{Thomas Sch\"urmann\thanks{Electronic address: \texttt{t.schurmann@icloud.com}}}
\affil{J\"ulich Supercomputing Centre, Research Centre J\"ulich, 52425 Jülich, Germany}
\date{}
\theoremstyle{plain}
\newtheorem{theorem}{Theorem}[section]
\newtheorem{corollary}[theorem]{Corollary}
\theoremstyle{definition}
\newtheorem{assumption}[theorem]{Assumption}
\newtheorem{definition}[theorem]{Definition}
\theoremstyle{remark}
\newtheorem{remark}[theorem]{\textbf{Remark}}
\begin{document}
	\maketitle

\begin{abstract}
	In quantum theory on curved backgrounds, Heisenberg's uncertainty principle is usually discussed in terms of ensemble variances and flat-space commutators. Here we take a different, preparation-based viewpoint tailored to sharp position measurements on spacelike hypersurfaces in general relativity. A projective localization is modeled as a von Neumann--L\"uders projection onto a geodesic ball $B_\Sigma(r)$ of radius $r$ on a Cauchy slice $(\Sigma,h)$, with the post-measurement state described by Dirichlet data. Using DeWitt-type momentum operators adapted to an orthonormal frame, we construct a geometric, coordinate-invariant momentum standard deviation $\sigma_p$ and show that strict confinement to $B_\Sigma(r)$ enforces an intrinsic kinetic-energy floor. The lower bound is set by the first Dirichlet eigenvalue $\lambda_1$ of the Laplace--Beltrami operator on the ball, $\sigma_p \ge \hbar\sqrt{\lambda_1}$, and is manifestly invariant under changes of coordinates and foliation. A variance decomposition separates the contribution of the modulus $|\psi|$ from phase-gradient fluctuations and clarifies how the spectral geometry of $(\Sigma,h)$ controls momentum uncertainty. 
	Assuming only minimal geometric information, weak mean-convexity of the boundary yields a universal, scale-invariant Heisenberg-type product bound, $\sigma_p r \ge \pi\hbar/2$, depending only on the proper radius $r$.	
\end{abstract}

	\paragraph{Keywords.} Heisenberg-type inequality; quantum mechanics on curved spacetime; spacelike hypersurfaces; spectral geometry; von Neumann-Lüders projection; 
	
\section{Introduction}

Historically, Kennard \cite{K27} was the first to adopt the standard deviation as a quantitative measure of uncertainty in quantum mechanics. However, neither he nor Heisenberg \cite{H27,H30} provided an explicit justification for this choice from the standpoint of experimental physics.
In the textbook formulation, Kennard’s choice \cite{K27} is effectively encoded in the ensemble-based variance relation
\begin{equation*}
\sigma_p \sigma_x \;\geq\; \frac{\hbar}{2},
\end{equation*}
which is obtained by measuring position and momentum on two identically $\psi$-prepared subensembles.  

By contrast, single-slit diffraction of particles has long been a paradigmatic illustration of Heisenberg’s uncertainty relations and their significance in quantum measurement \cite{H27,H30}. A beam of particles is sent through a slit of width $\Delta x$ in an otherwise opaque diaphragm; behind the slit, the transmitted wave is diffracted and acquires a spread in transverse momentum that grows as $\Delta x$ is reduced.  This is the standard heuristic picture used to introduce Heisenberg's principle \cite{H27,H30}.  If one regards passage through the slit itself as a preparation procedure, then the relevant quantity is the momentum spread of the transmitted beam.  A finite momentum standard deviation $\sigma_p < \infty$ is obtained only if the wave function of the transmitted state vanishes at the edges of the interval of width $\Delta x$; otherwise $\sigma_p$ is not even defined.  Under this natural regularity condition the optimal sharp inequality reads \cite{TS09,TS23}
\begin{equation*}
	\sigma_p \,\Delta x \;\geq\; \pi \hbar.
\end{equation*}
Mathematically, this bound is nothing but the one-dimensional Wirtinger inequality (\cite{Wirtinger}, Ch.\,1.7), the prototype of the wider class of Poincar\'e inequalities.  Physically, it embodies a different viewpoint from Kennard's: the preparation is implemented by a concrete projection (the slit), and the uncertainty relation links a \emph{geometric} parameter $\Delta x$ to the resulting momentum spread.

These two inequalities,
\begin{equation*}
\sigma_p \sigma_x \;\geq\; \frac{\hbar}{2}
\qquad\text{and}\qquad
\sigma_p \Delta x \;\geq\; \pi\hbar,
\end{equation*}
are therefore operationally inequivalent.  In the Kennard relation \cite{K27}, one first chooses a state $\psi$ and then reads off $\sigma_x$ and $\sigma_p$ from separate ensembles prepared in that same state; the relation constrains the spreads of position and momentum for a fixed $\psi$.  In the single-slit setting, by contrast, the preparation itself is given by a sharp spatial restriction: the incoming state $\psi$ is projected onto the interval of width $\Delta x$ defined by the slit, and the transmitted ensemble is described by the von Neumann--L\"uders update of $\psi$ \cite{vonNeumann1932, Lueders1951}.  The corresponding inequality directly relates the geometric preparation parameter $\Delta x$ of the \emph{projected} state to its momentum standard deviation $\sigma_p$.  Neither inequality can be derived from the other; they capture complementary aspects of the incompatibility of position and momentum.

If one asks how Heisenberg's principle should be formulated in a relativistic setting, this preparation-based point of view becomes particularly important.  In regions where gravity is strong, any attempt to localize a particle (or an event) within a very small spacetime region requires large momenta and therefore high energy.  According to Einstein's field equations, such energy densities curve the spacetime metric and may even lead to horizon formation.  Much of the literature on this interface between quantum mechanics and gravity proceeds by generalizing Kennard's ensemble-based relation: one modifies the canonical commutation relations or the functional form of the position-momentum uncertainty bound, leading to a family of so-called Generalized Uncertainty Principles (GUPs) \cite{Perivolaropoulos:2017rgq,Parsamehr:2024,Skara:2019,BensalemBouaziz:2022,Lawson:2020}.  Structurally, these proposals keep the Kennard picture of preparation and measurement but deform the algebra or the bound.

If one wants spacetime geometry and gravitational backreaction to enter already at the level of the preparation step, it is natural to turn the logic around.  The single-slit experiment suggests that a sharp preparation is most faithfully modeled by a projection onto a spatial region, not by an abstract choice of wave function.  In flat space this region is an interval or a slit; in general relativity, the appropriate analogue is a spatial domain on a Cauchy slice.  The basic idea of this work is therefore to take von Neumann--L\"uders projections onto higher-dimensional spatial regions as the starting point and to let the geometry of those regions control the resulting momentum uncertainty.

Concretely, we work on a Riemannian manifold $(\Sigma,h)$ representing a spacelike hypersurface of a Lorentzian spacetime.  A sharp localization of a quantum state to a bounded domain $\Omega\subset\Sigma$ with Dirichlet boundary conditions ($\psi|_{\partial\Omega}=0$) plays the role of an idealized projective preparation: the probability to find the particle outside $\Omega$ vanishes.  From the spectral point of view, such strict localization has an unavoidable cost.  Once Dirichlet data are imposed, the $L^2$-norm of the gradient of $\psi$ cannot be made arbitrarily small: the intrinsic uncertainty is governed by the first Dirichlet eigenvalue of the Laplace--Beltrami operator on $\Omega$ \cite[Ch.\,3]{Chavel1984}.  By the Rayleigh--Ritz characterization, the first Dirichlet eigenvalue $\lambda_1(\Omega;h)$ is the minimal value of the Dirichlet energy
\begin{equation*}
\int_\Omega \|\nabla_h \psi\|_h^2\,d\mu_h
\end{equation*}
among all normalized wave functions $\psi\in H^1_0(\Omega)$.  Thus strict localization to $\Omega$ enforces a nonzero kinetic-energy floor set by $\lambda_1$.  Since the geometric momentum variance defined in Section~\ref{sec:var} is directly related to this kinetic energy, $\lambda_1$ emerges as the intrinsic spectral invariant that fixes the minimal momentum uncertainty compatible with localization to~$\Omega$.

To connect this principle to general relativity, we apply it to geodesic balls on Cauchy slices.  Let $(\Sigma,h)$ be a spacelike hypersurface in a curved spacetime and $B_\Sigma(p,r)$ the $h$-geodesic ball of proper radius $r$ around a point $p\in\Sigma$.  For states that are strictly localized to $B_\Sigma(p,r)$ with Dirichlet data, we prove the intrinsic lower bound
\[
\sigma_p \;\geq\; \hbar \sqrt{\lambda_1(B_\Sigma;h)},
\]
where $\lambda_1(B_\Sigma;h)$ is the first Dirichlet eigenvalue of the Laplace--Beltrami operator $-\Delta_h$ on the ball, and $\sigma_p$ is the corresponding geometric momentum standard deviation (Theorem~\ref{thm:main}).  The estimate is \emph{intrinsic}: it depends only on the induced Riemannian metric $h$ on the slice and is manifestly insensitive to the lapse, the shift, and the extrinsic curvature.  Passing to a different slice (at a different time or in a different foliation) generally changes the geometry of $B_\Sigma(p,r)$ and hence the numerical value of $\lambda_1$, but the geometric form of the bound is the same on each slice.

The variance decomposition derived in Section~\ref{sec:var} makes this connection between geometry and uncertainty more transparent.  Writing $\psi = |\psi| e^{i\varphi}$ one obtains a kinetic-energy window
\[
\hbar\|\nabla^{h}|\psi|\|_{L^{2}} \;\le\; \sigma_{p} \;\le\; \hbar\|\nabla^{h}\psi\|_{L^{2}},
\]
which separates the contribution coming from the modulus $|\psi|$ from the fluctuations of the phase gradient in an orthonormal frame.  Operationally, this provides a clean bridge between spectral geometry on $(\Sigma,h)$ and the momentum uncertainty defined by the canonical commutator algebra on the slice.  In contrast to GUP models that postulate modified commutation relations, our construction stays within standard quantum mechanics on a fixed curved background; all “new physics’’ is carried by the geometry of the localization region.

In many situations, however, the detailed interior geometry of $B_\Sigma(p,r)$ is not known, while some information about its boundary is available.  This includes, for instance, numerical relativity setups where one can control the proper size and mean curvature of an outer boundary, or gravitational settings where a marginally outer trapped surface provides a distinguished two-sphere but the interior metric remains uncertain.  Under a weak mean-convexity assumption on the boundary (made precise in Assumption~\ref{ass:min}), the distance to $\partial B_\Sigma(p,r)$ is distributionally superharmonic.  In that case the sharp boundary-distance Hardy inequality \cite{LewisLiLi2011,DAmbrosioDipierro2014,BarbatisReview} applies and yields a purely geometric lower bound on the first Dirichlet eigenvalue in terms of the radius $r$.  Combined with our intrinsic estimate, this leads to a universal Heisenberg-type product inequality
\[
\sigma_p \,r \;\geq\; \frac{\hbar}{2}
\]
for all strictly localized states (Corollary~\ref{cor:Hardy}).  This bound is never attained, which makes it a robust, geometry-free floor on $\sigma_p$ that depends only on the proper radius $r$ and the boundary, but not on the detailed curvature in the interior.

The same weak mean-convexity hypothesis allows one to go further.  Using a vector-field version of Barta's method \cite{BessaMontenegro}, adapted to the radial distance on $B_\Sigma(p,r)$, we construct a drift field that probes only the sign of the radial Laplacian.  This yields a Barta-type lower bound on the first Dirichlet eigenvalue and hence a sharper, still scale-invariant product inequality
\[
\sigma_p\,r \;\geq\; \frac{\pi\hbar}{2}
\]
for all strictly localized states (Corollary~\ref{cor:barta}).  The improvement by a factor of $\pi$ again involves no further assumption on the slice: it is coordinate- and foliation-independent and does not require symmetry, stationarity, or vacuum conditions.  In a suitable minimax sense in three dimensions, this constant is best possible under the sole hypothesis of weak mean-convexity; any further sharpening must draw on more detailed curvature information.

Whenever such curvature information \emph{is} available, classical comparison techniques from spectral geometry can be imported directly to control $\lambda_1$ and thereby $\sigma_p$.  Cheng-type eigenvalue comparison theorems \cite{Cheng1975,BessaMontenegro2005,Yang1999,Ling2006} relate the first Dirichlet eigenvalue on a geodesic ball in $(\Sigma,h)$ to that on model spaces with prescribed Ricci curvature; small-ball asymptotics \cite{KarpPinsky1986} give precise expansions in the regime where the ball radius is much smaller than the curvature radius; and explicit radial analyses on spherically symmetric manifolds \cite{BorisovFreitas2017} provide detailed control in highly symmetric situations.  For manifolds of constant curvature, the universal product bounds obtained here are complemented by sharper curvature-dependent estimates previously derived in \cite{Schuermann2018}.

\medskip
\noindent\textbf{Organization.}
Section~\ref{sec:MinGeo} introduces the minimal geometric setting and the notion of strict Dirichlet localization to a geodesic ball; in particular, Assumption~\ref{ass:min} records the standing hypotheses on $B_\Sigma(p,r)$.  Section~\ref{sec:main} establishes the intrinsic Heisenberg-type lower bound in terms of the first Dirichlet eigenvalue (Theorem~\ref{thm:main}).  Subsections~\ref{subsec:Hardy} and~\ref{subsec:Barta} then derive the universal Hardy--Heisenberg baseline and the Barta-type improvement, respectively.  Section~\ref{sec:Examples} presents examples and applications, and Section~\ref{sec:Diss} closes with a discussion of scope and possible extensions.

\section{Minimal geometric setting}\label{sec:MinGeo}
	
	Let $(M,g)$ be a time-oriented $C^2$ Lorentzian manifold solving the Einstein field
	equations (with arbitrary stress-energy tensor and possibly $\Lambda$), and let
	$\Sigma\subset M$ be a $C^2$ spacelike hypersurface with induced Riemannian metric $h$
	and volume element $d\mu_h$.  We fix $p\in\Sigma$ and $r>0$.  Since $\Sigma$ is spacelike,
	it carries a unique future-directed unit normal field $n$ along $\Sigma$ with $g(n,n)=-1$,
	and the Levi--Civita connection $\nabla^h$ on $(\Sigma,h)$ is the tangential projection
	of the spacetime connection~$\nabla$.
	In what follows only the intrinsic Riemannian data $(\Sigma,h)$ will enter.  None of the
	extrinsic quantities (lapse, shift, or extrinsic curvature) appear in the estimates, so
	all bounds below are coordinate- and foliation-independent.

	\paragraph{Geodesic balls on the slice.}
	
	All domains are taken within $(\Sigma,h)$.  We write $B_\Sigma(p,r)$ for the $h$-geodesic ball of radius $r$ centered at $p$. Recall that the injectivity radius of $(\Sigma,h)$ at $p$, denoted $\operatorname{inj}_\Sigma(p)$, is the supremum of $r>0$ such that the Riemannian exponential map
	\[
	\exp^h_p \colon B_{T_p\Sigma}(0,r) \to \Sigma
	\]
	is a diffeomorphism onto its image; equivalently, every $h$-geodesic starting at $p$ is
	minimizing up to length~$r$.
	Throughout we choose $r$ strictly below $\operatorname{inj}_\Sigma(p)$ so that
	$B_\Sigma(p,r)$ is a bounded (Lipschitz) domain; in particular, the Dirichlet Laplacian
	on $B_\Sigma(p,r)$ has compact resolvent.  These minimal hypotheses are recorded in the following Assumption~\ref{ass:min} and will be in force for the main intrinsic estimate.

	\begin{assumption}[Minimal hypotheses]\label{ass:min}
	We assume:
	\begin{enumerate}
		\item[(i)]$r$ is strictly less than the injectivity radius $\operatorname{inj}_\Sigma(p)$ and
		chosen so that $B_\Sigma(p,r)$ is a bounded Lipschitz domain; in particular, the
		Dirichlet Laplacian has compact resolvent.

		\item[(ii)] States are strictly localized to $B_\Sigma(p,r)$: $\psi\in H^1_0(B_\Sigma(p,r))$
		with $\|\psi\|_{L^2}=1$.
	\end{enumerate}
	\end{assumption}
	\medskip\noindent
		No further assumption (e.g., symmetry, stationarity, vacuum, or asymptotic structure) is required for the main intrinsic estimate. 	
		For the Hardy and Barta baselines in Corollary \ref{cor:Hardy}  and \ref{cor:barta}, we additionally assume
		that the boundary $\partial B_\Sigma(p, r)$ of the ball is weakly mean-convex.
		For $C^2$ boundaries this is known to be equivalent to the boundary-distance
		function $d(x) = \operatorname{dist}_h(x,\partial B_\Sigma(p,r))$ being
		distributionally superharmonic on $B_\Sigma(p,r)$ (i.e., $-\Delta_h d \ge 0$);
		see Definition~\ref{def:weakly} below and \cite[Thm.\,1.2]{LewisLiLi2011} together with \cite{DAmbrosioDipierro2014}.
\medskip	
	\begin{remark} Strict localization and Dirichlet data.
		The Dirichlet condition in Assumption~\ref{ass:min} should be viewed as an idealized
		notion of strict localization or von\,Neumann--L\"uders preparation to $B_\Sigma(p,r)$. Physically, wave packets will typically
		have small ``tails'' that extend beyond any given geodesic ball. The present bounds are
		nevertheless relevant in that they describe the limiting behaviour of momentum uncertainty
		for families of states whose $L^2$-mass becomes more and more concentrated in
		$B_\Sigma(p,r)$. From a mathematical viewpoint, Dirichlet data are natural because they
		yield a purely bulk representation of the variance in terms of the Dirichlet form of
		$-\Delta_h$ and a discrete Dirichlet spectrum; for Neumann or Robin boundary conditions,
		boundary terms survive and the clean spectral control used in Theorem~\ref{thm:main}
		breaks down.
	\end{remark}
	
	\subsection{Hilbert space on geodesic balls}\label{sub:geoball}

	In the intrinsic, slice-based formulation adopted here, geodesic balls play a distinguished role as localization regions. On a spacelike hypersurface $(\Sigma,h)$, a geodesic ball $B_\Sigma(p,r)$ is defined purely in terms of the Riemannian distance induced by $h$; it does not depend on any particular coordinate chart or foliation, and it automatically adapts to the local curvature of the slice. This makes geodesic balls the most natural choice for implementing sharp, covariant position measurements in curved spacetime, in contrast to coordinate boxes or ``cubes'', whose geometry is tied to a specific parametrization of $\Sigma$.

	From the spectral point of view, the relevance of geodesic balls is even more pronounced. As shown later in Theorem~\ref{thm:main}, the intrinsic lower bound on the canonical momentum uncertainty of a state that is strictly localized in a bounded region $\Omega \subset \Sigma$ is governed by the first Dirichlet eigenvalue of the Laplace--Beltrami operator on $\Omega$. In many geometric settings, Faber--Krahn-type comparison results imply that, among all domains of a given volume, geodesic balls minimize this first Dirichlet eigenvalue. Equivalently, for fixed volume, geodesic balls yield the smallest possible spectral cost of strict localization and therefore the weakest (i.e.\ most forgiving) intrinsic lower bound on the momentum uncertainty. Any other region with the same volume has a larger first Dirichlet eigenvalue and thus enforces a stronger Heisenberg-type constraint.

	For our purposes, this means that geodesic balls provide a canonical ``worst-case'' reference geometry: if one can establish universal lower bounds for the momentum uncertainty on $B_\Sigma(p,r)$, then the corresponding bounds for other shapes of the same volume are automatically no better and typically stricter. At the same time, the round geometry of geodesic balls keeps the subsequent analysis technically manageable and well aligned with classical tools from spectral geometry. In the remainder of this subsection we therefore formulate the Hilbert space setting directly on $L^2\bigl(B_\Sigma(p,r),d\mu_h\bigr)$, taking strict Dirichlet localization to the geodesic ball as our basic notion of position uncertainty.
	
	We work on an $n$-dimensional (or $n{=}3$) Riemannian manifold $(\Sigma,h)$ and restrict attention to the geodesic ball $B_\Sigma(p,r)\subset(\Sigma,h)$. Let $\{\partial_i\}$ be a local coordinate basis and let $\{X_a\}_{a=1}^n$ be an $h$-orthonormal frame. Indices $i,j,k,\dots$ refer to coordinates (raised/lowered with $h$), while $a,b,c,\dots$ refer to the orthonormal frame (raised/lowered with $\delta$). The frames are related by a vielbein $e_a{}^{\,i}$ and its inverse $e^a{}_{\,i}$:
	\begin{equation*}
		X_a=e_a{}^{\,i}\,\partial_i,\qquad
		\partial_i=e_i{}^{\,a}\,X_a,\qquad
		e_i{}^{\,a}e_a{}^{\,j}=\delta_i^{\,j},\qquad
		e_a{}^{\,i}e_i{}^{\,b}=\delta_a^{\,b},
	\end{equation*}
	so that
	\begin{equation*}
		h_{ij}=e_i{}^{\,a}e_j{}^{\,b}\,\delta_{ab},\qquad
		h^{ij}=e^i{}_{\,a}e^j{}_{\,b}\,\delta^{ab},\qquad
		d\mu_h=\sqrt{|h|}\,d^n x .
	\end{equation*}
	All integrals and $L^2$ norms below are taken over $B_\Sigma(p,r)$ with respect to $d\mu_h$.
	For $v{=}\sum_{a=1}^n v^a X_a(x)\in T_x\Sigma$ we set
	\begin{equation*}
		\|v(x)\|_h^2 := h(v,v) = \sum_{a=1}^n |v^a(x)|^2 .
	\end{equation*}
	The pointwise gradient norm is
	\begin{equation}\label{eq:gradnorm}
		\|\nabla^h \psi(x)\|_h^2 := h(\nabla^h \psi,\nabla^h \psi)
		= \sum_{a=1}^n |X_a \psi(x)|^2 ,\qquad \psi:\Sigma\to\mathbb{C}.
	\end{equation}
	The $L^2$ inner product and norms are
	\begin{equation*}
		\langle f,g\rangle_{L^2} := \int_{B_\Sigma(p,r)} \overline{f}\,g\,d\mu_h,\qquad
		\|f\|_{L^2} := \Big(\int_{B_\Sigma(p,r)} |f|^2\,d\mu_h\Big)^{1/2},
	\end{equation*}
	and, for vector fields $Y(x)$ on $B_\Sigma(p,r)$,
	\begin{eqnarray*}
		\|Y\|_{L^2} &:=& \Big(\int_{B_\Sigma(p,r)} \|Y(x)\|_h^2\,d\mu_h\Big)^{1/2},\\
		\|\nabla^h \psi\|_{L^2} &:=& \Big(\int_{B_\Sigma(p,r)} \|\nabla^h \psi\|_h^{2}\,d\mu_h\Big)^{1/2}
		= \Big(\sum_{a=1}^n \!\int_{B_\Sigma(p,r)} |X_a\psi|^2\,d\mu_h\Big)^{\!1/2}.
	\end{eqnarray*}
\noindent
\textbf{Born probability measure.} The Hilbert space on the geodesic ball is $L^2(B_\Sigma(p,r),d\mu_h)$, where
$d\mu_h$ is the Riemannian volume element of the induced metric $h$. For a normalized state $\psi$, the density $|\psi|^2 d\mu_h$ is therefore the natural Born probability measure
for position on $B_\Sigma(p,r)$, and all expectations are taken with respect to this measure. 

\subsection{Momentum operator and variance decomposition}\label{sec:var}

Before strictly choosing the momentum operators on the curved manifold, we briefly
motivate the choice of the reference frame. While the traditional DeWitt quantization
\cite{DeWitt1957,DeWitt1965} can be consistently formulated in a natural coordinate basis
$\{\partial_i\}$, the resulting expression for the momentum variance inevitably involves the
inverse metric $h^{ij}(x)$ to contract the operator indices. This explicit metric dependence
obscures the physical interpretation, as it mixes kinematic fluctuations with geometric terms
in a non-trivial way. In contrast, adopting an $h$-orthonormal frame $\{X_a\}$ effectively
diagonalizes the local geometry, allowing the variance to be expressed as a simple sum of
squares analogous to the Euclidean case. This generalized approach is advantageous as it
facilitates a clean, coordinate-independent separation of the pure quantum fluctuations from
the intrinsic curvature effects encoded in the geometric potential derived below.\medskip

From the structural point of view, the explicit form of the momentum operators associated
with a given orthonormal frame is strongly constrained by general axiomatic schemes for
quantization on configuration manifolds. In Segal’s representation-theoretic approach to
quantization of nonlinear systems \cite{S60} and in the framework of geometric quantization
\cite{S80}, one requires that configuration observables $f\in C^\infty_0(\Sigma)$ act as
multiplication operators and that each momentum observable $P_X$ associated with a real
vector field $X$ be represented by a first-order differential operator whose principal symbol
coincides with the classical momentum. In addition, the commutator with configuration
observables is required to reproduce the classical Poisson brackets,
\begin{equation}\label{eq:CommxP}
[f, P_X] = i\hbar\,X(f),
\qquad f\in C^\infty_0(B_\Sigma(p,r)),
\end{equation}
and the resulting operators must be symmetric on $L^2(B_\Sigma(p,r),d\mu_h)$ with respect
to the Riemannian volume $d\mu_h$. The Borel quantization programme developed in
\cite{D96,D01} shows that, under these assumptions and in the presence of a fixed Riemannian
structure $(\Sigma,h)$, there is a distinguished choice for $P_X$ that is covariant under
changes of orthonormal frame and compatible with the unitary implementation of the
configuration space symmetries.

Specializing to an $h$-orthonormal frame $\{X_a\}_{a=1}^n$, the corresponding momentum
components are then realized as the differential operators
\begin{equation}
	P_a := - i\hbar\left(
	\nabla^h_{X_a} + \frac{1}{2}\,\mathrm{div}_h X_a
	\right),
	\qquad a = 1,\dots,n,
	\label{eq:Pa-DeWitt}
\end{equation}
which coincide with DeWitt’s covariant momenta in an orthonormal frame
\cite{DeWitt1957,DeWitt1965} and reduce to the familiar $p_i = - i\hbar\,\partial_i$ in
Euclidean space with a global Cartesian frame. In the language of \cite{D96,D01}, the
$\frac{1}{2}\,\mathrm{div}_h X_a$ term is precisely the correction dictated by the requirement
that the map $X\mapsto P_X$ respect the Lie algebra structure of real vector fields and that
$\sum_a P_a^2$ reproduce, up to a possible scalar curvature term, the Laplace--Beltrami
operator on $(\Sigma,h)$. With this choice, the $P_a$ generate the infinitesimal unitary action
of the isometry group on $L^2(B_\Sigma(p,r),d\mu_h)$ and provide the natural, frame-based
generalization of the canonical momenta used in flat-space quantum mechanics.

Viewed as operators on the Hilbert space $L^2(\Omega,d\mu_h)$, the $P_a$ in
\eqref{eq:Pa-DeWitt} are symmetric on the natural Dirichlet form domain (for instance
$H^1_0(\Omega)$ on a domain $\Omega \subset \Sigma$): for all $\varphi,\psi$ in that domain
one has
$\langle \varphi, P_a \psi \rangle_{L^2} {=} \langle P_a \varphi, \psi \rangle_{L^2}$.
    For a normalized state $\psi$ with $\|\psi\|_{L^2} {=} 1$ we write
    $\langle A\rangle {:=} \langle \psi, A\psi\rangle_{L^2}$ and collect the momentum
	components into the vector $P {:=} (P_a)_{a=1}^n$ with mean
	$\langle P\rangle {:=} (\langle P_1\rangle,\dots,\langle P_n\rangle)$. 
	In the present geometric setting we measure the distance using the pointwise $h$-norm,
	and for the canonical momenta $P_a$ we introduce the naive momentum
	variance of $\psi$ by
	\begin{equation}
		\mathrm{Var}^{\mathrm{naive}}_\psi(P) :=
		\big\langle (P - \langle P\rangle)^2 \big\rangle
		\equiv \sum_{a=1}^n \left( \langle P_a^2 \rangle - \langle P_a \rangle^2 \right).
		\label{eq:VarNaiveDef}
	\end{equation}
	Motivated by the structure of the quadratic form of the momenta on $H^1_0(B_\Sigma(p,r))$ (see the integration-by-parts computation in Appendix~\ref{app:DeWitt-IBP}) we have 
	\begin{equation}\label{eq:quadratic-form-identity}
		\langle P^2 \rangle = \hbar^2\|\nabla^h\psi\|_{L^2}^2
		- \hbar^2\, \langle V\rangle
	\end{equation}
	with
	\begin{equation}
		V(x) := \frac{1}{2}\sum_{a=1}^n X_a\big(f_a\big)(x)
		+ \frac{1}{4}\sum_{a=1}^n f_a(x)^2,
		\qquad x \in B_\Sigma(p,r).
		\label{eq:Vpotential}
	\end{equation}
	and $f_a := \mathrm{div}_h X_a$ for the divergence of the frame fields. 
	By construction, this potential captures exactly the curvature- and frame-dependent
	deviation of the geometric kinetic energy from the scalar Dirichlet energy. 	
	It is convenient to isolate the purely geometric contribution to the naive variance \eqref{eq:VarNaiveDef} in the form of a scalar potential \cite{DeWitt1957,DeWitt1965}. 

		\begin{definition} (Geometric momentum variance)
		Given a normalized state $\psi\in H^1_0(B_\Sigma(p,r))$, we define its
		\emph{geometric momentum variance} by
		\begin{equation}
			\mathrm{Var}_\psi(P)
			:= \big\langle (P - \langle P\rangle)^2 \big\rangle
			+ \hbar^2\langle V\rangle,
			\label{eq:VarGeom}
		\end{equation}
		that is, by adding to the naive variance \eqref{eq:VarNaiveDef} the geometric
		correction term determined by the potential $V$ in \eqref{eq:Vpotential}.
	\end{definition}

\begin{remark}[Operational distinction]
	It is important to distinguish the physical roles of the two variances defined above.
	The \emph{naive variance} $\mathrm{Var}^{\mathrm{naive}}_\psi(P)$ in \eqref{eq:VarNaiveDef} corresponds to the raw statistical spread one would obtain by measuring the frame-dependent operators $P_a$. As shown in \eqref{eq:quadratic-form-identity}, this quantity mixes the kinematic fluctuations of the state with the local curvature energy encoded in $\langle V\rangle$.
	By contrast, the \emph{geometric variance} defined in \eqref{eq:VarGeom} effectively filters out this frame-dependent potential. It isolates the intrinsic kinetic energy contribution associated with the scalar Laplace--Beltrami operator.
	Consequently, the lower bound derived in Theorem~\ref{thm:main} constrains the \emph{kinetic capacity} required for strict localization within the geometry, invariant under the choice of the orthonormal frame.
\end{remark}

	In order to relate the geometric momentum variance to the quadratic form of the
	momenta, we first rewrite the naive variance \eqref{eq:VarNaiveDef} in terms of the kinetic-energy form of $P$,
	\begin{equation}
		\mathrm{Var}^{\mathrm{naive}}_\psi(P) 
		= \langle P^2\rangle - \|\langle P\rangle\|_h^2\, .
		\label{eq:VarNaiveExpansion}
	\end{equation}
	Inserting the quadratic-form identity \eqref{eq:quadratic-form-identity} into \eqref{eq:VarNaiveExpansion} with \eqref{eq:VarGeom} this yields the intrinsic variance decomposition
	\begin{equation}
		\mathrm{Var}_\psi(P)
		= \hbar^2\|\nabla^h\psi\|_{L^2}^2 - \|\langle P\rangle\|_h^2.
		\label{eq:VarDecomposition}
	\end{equation}
	In what follows we measure momentum uncertainty by the corresponding standard deviation
	\begin{equation}
		\sigma_p(\psi) := \sqrt{\mathrm{Var}_\psi(P)}.
		\label{eq:SigmaPDef}
	\end{equation}
	
	\begin{remark}\label{rem:smallball} Physical interpretation and small-ball limit. 
		To clarify the physical significance of the potential $V$, consider its asymptotic behavior on small geodesic balls $B_\Sigma(p,r)$. In Riemann normal coordinates centered at $p$, the metric is Euclidean to first order, but the volume element expands as $\mathrm{d}\mu_h = [1 - \frac{1}{6} R_{ij}(p) x^i x^j + O(r^3)]\,\mathrm{d}^n x$. The divergence of the frame fields is determined by the logarithmic derivative of the volume density, yielding $f_a \approx \partial_a \ln \sqrt{|h|} \approx -\frac{1}{3} R_{ab}(p) x^b$. Substituting this into definition \eqref{eq:Vpotential}, the quadratic term $f_a^2$ vanishes to leading order, while the derivative term contributes $\frac{1}{2}\sum_a \partial_a f_a \approx -\frac{1}{6} \sum_a R_{aa}(p)$. Thus, for states highly localized near $p$, the expectation value behaves asymptotically as
		\begin{equation}
			\langle V \rangle \;\sim\; -\frac{1}{6}\,\mathrm{Scal}(p).
		\end{equation}
		This identifies the geometric potential $V$ as the intrinsic curvature correction -- specifically the scalar curvature term $\frac{1}{6}R$ -- that naturally arises in the quantization of systems on curved manifolds to maintain general covariance \cite{DeWitt1957}. Operationally, $\langle V \rangle$ measures the energy shift between the square of the Hermitian momentum operators and the covariant Laplace--Beltrami operator.
	\end{remark}
	
	\section{An intrinsic Heisenberg-type lower bound}\label{sec:main}
		
	From a physical standpoint, Dirichlet data $\psi|_{\partial B_\Sigma(p,r)} = 0$ idealize a
	state $\psi$ that is strictly confined to the ball-region $B_\Sigma(p,r)$, for instance in an infinite
	potential well or as the post-measurement state after a sharp localization to that ball: the
	probability of finding the particle on the boundary or outside $B_\Sigma(p,r)$ is zero.
	Neumann conditions $\partial_n \psi = 0$ instead describe perfectly reflecting boundaries with
	vanishing normal probability flux, while Robin conditions $(\partial_n \psi + \kappa \psi)=0$
	can model partially reflecting surfaces or boundary interactions localized at $\partial
	B_\Sigma(p,r)$. The intrinsic lower bound proved below thus pertains specifically to this
	idealized regime of strict confinement modeled by Dirichlet data; more general boundary
	conditions would require incorporating the corresponding surface terms into the uncertainty
	estimate and lead to a different physical problem.

Accordingly, in the following we impose Dirichlet boundary conditions throughout and work
with the Dirichlet Laplacian on bounded domains. By the Dirichlet Rayleigh--Ritz principle
(see \cite{Chavel1984}, Ch.\,3), the first Dirichlet eigenvalue $\lambda_1(\Omega; h)$ of $-\Delta_h$ on a
bounded domain $\Omega \subset \Sigma$ is characterized by
	\begin{equation}\label{eq:lambda1}
		\lambda_{1}(\Omega;h)\;=\;\inf\Bigg\{\frac{\int_{\Omega}\|\nabla^{h}u\|^{2}\,\mathrm d\mu_{h}}{\int_{\Omega}|u|^{2}\,\mathrm d\mu_{h}}: \; 0\neq u\in H^{1}_{0}(\Omega)\Bigg\}.
	\end{equation}
	
	\begin{theorem}[Intrinsic Heisenberg-type lower bound]\label{thm:main}
		Let Assumption~\ref{ass:min} hold. Then, for every normalized $\psi\in H^{1}_{0}(B_{\Sigma}(p,r))$,
		\begin{equation}\label{eq:IHB}
			\sigma_{p}(\psi)\;\ge\;\hbar\,\sqrt{\lambda_{1}\!\left(B_{\Sigma};h\right)}.
		\end{equation}
	Equality holds if and only if $|\psi|$ is a first Dirichlet eigenfunction of $-\Delta_h$
	on $B_\Sigma(p, r)$ and the phase gradient $\nabla_h(\arg \psi)$ is constant almost
	everywhere on $B_\Sigma(p, r)$ with respect to the probability measure $|\psi|^2\,d\mu_h$.
	\end{theorem}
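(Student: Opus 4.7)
The plan is to prove the bound by establishing the lower half of the kinetic-energy window $\hbar\|\nabla^h|\psi|\|_{L^2}\le \sigma_p(\psi)$ mentioned in the introduction, and then applying the Rayleigh--Ritz characterization \eqref{eq:lambda1} of $\lambda_1$ to the test function $|\psi|$. Starting from the variance decomposition \eqref{eq:VarDecomposition}, it suffices to prove
\begin{equation*}
\|\langle P\rangle\|_h^{2}\;\le\;\hbar^{2}\bigl(\|\nabla^h\psi\|_{L^2}^{2}-\|\nabla^h|\psi|\|_{L^2}^{2}\bigr),
\end{equation*}
since then $\sigma_p(\psi)^{2}\ge \hbar^{2}\|\nabla^h|\psi|\|_{L^2}^{2}$, and Kato's inequality places $|\psi|$ in $H^1_0(B_\Sigma(p,r))$ with unit $L^2$-norm, so that Rayleigh--Ritz yields $\|\nabla^h|\psi|\|_{L^2}^{2}\ge \lambda_1(B_\Sigma(p,r);h)$.

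The key pointwise ingredient is the Madelung/polar decomposition: on $\{|\psi|>0\}$, writing $\psi=|\psi|\,e^{i\phi}$ and differentiating in the orthonormal frame gives
\begin{equation*}
\|\nabla^h\psi\|_h^{2}\;=\;\|\nabla^h|\psi|\|_h^{2}+|\psi|^{2}\,\|\nabla^h\phi\|_h^{2}\qquad\text{a.e. on }\{|\psi|>0\},
\end{equation*}
so that after integration the right-hand side of the target inequality equals $\hbar^{2}\!\int|\psi|^{2}\|\nabla^h\phi\|_h^{2}\,d\mu_h$. For the left-hand side I would compute $\langle P_a\rangle$ directly from \eqref{eq:Pa-DeWitt} using the polar form: the real part of $\bar\psi X_a\psi$ combines with the divergence correction $\tfrac12 f_a|\psi|^{2}$ into the total divergence $\tfrac12\,\mathrm{div}_h(|\psi|^{2}X_a)$, which integrates to zero because $\psi\in H^1_0$ vanishes on $\partial B_\Sigma(p,r)$. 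Only the imaginary, phase-dependent part survives, giving
\begin{equation*}
\langle P_a\rangle\;=\;\hbar\int_{B_\Sigma(p,r)}|\psi|^{2}\,X_a\phi\,d\mu_h.
\end{equation*}
Cauchy--Schwarz with the probability weight $|\psi|^{2}d\mu_h$ yields $|\langle P_a\rangle|^{2}\le\hbar^{2}\!\int|\psi|^{2}(X_a\phi)^{2}d\mu_h$; summing over the orthonormal index $a$ produces precisely the required estimate, and inserting it into \eqref{eq:VarDecomposition} closes the inequality.

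For the equality case I would track the two independent saturation conditions: Rayleigh--Ritz saturates iff $|\psi|$ is a first Dirichlet eigenfunction of $-\Delta_h$ on $B_\Sigma(p,r)$, while the $n$ Cauchy--Schwarz estimates saturate iff each $X_a\phi$ is constant with respect to $|\psi|^{2}d\mu_h$. Since a first Dirichlet eigenfunction is strictly positive in the open interior by the Hopf strong maximum principle, the support of $|\psi|^{2}$ has full measure in $B_\Sigma(p,r)$, upgrading the second condition to $\nabla^h\phi$ being constant $d\mu_h$-a.e.\ on the whole ball -- exactly the stated characterization.

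The principal technical obstacle is making the polar decomposition rigorous on the zero set of $\psi$, where $\arg\psi$ is undefined and $|\psi|$ need not be $C^1$. The clean remedy is to work with the probability current $j_a:=\hbar\,\mathrm{Im}(\bar\psi X_a\psi)\in L^1$, which is intrinsically defined for every $\psi\in H^1(B_\Sigma(p,r))$, agrees with $\hbar|\psi|^{2}X_a\phi$ on $\{|\psi|>0\}$, and vanishes a.e.\ on $\{|\psi|=0\}$. This bypasses any singularity in $\phi$ and converts the formal polar manipulations above into literal $L^1$ identities, completing the argument without any ad hoc regularization.
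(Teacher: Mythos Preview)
Your proposal is correct and follows essentially the same route as the paper's own proof: polar decomposition $\psi=|\psi|e^{i\phi}$, computation of $\langle P_a\rangle=\hbar\int|\psi|^2 X_a\phi\,d\mu_h$ via the divergence structure of the DeWitt momenta, Cauchy--Schwarz against the probability weight $|\psi|^2 d\mu_h$, and then Rayleigh--Ritz applied to $|\psi|$. Your extra care with Kato's inequality, the probability current on the zero set, and the strong maximum principle in the equality analysis are technical refinements the paper leaves implicit, but they do not change the underlying strategy.
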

	\noindent
	\textbf{Proof.} 
	Write $\psi = u\,e^{i\phi}$ with $u = |\psi|\ge 0$, and fix an $h$-orthonormal frame $\{X_a\}_{a=1}^n$ on $B_\Sigma(p,r)$. For Dirichlet data ($u|_{\partial B_\Sigma(p,r)}=0$), integration by parts yields the components of the mean momentum
	\begin{equation}
		\langle P_a\rangle
		= \hbar \int_{B_\Sigma(p,r)} u^{2}\, X_a \phi \, d\mu_h,
		\qquad a=1,\dots,n .
	\end{equation}
	Since $\psi = u\, e^{i\phi}$ with $u = |\psi| \ge 0$ and real-valued phase $\phi$,
	differentiate in an $h$-orthonormal frame $\{X_a\}_{a=1}^n$ to obtain
	\[
	X_a\psi = e^{i\phi}(X_a u + i u\,X_a\phi), \qquad a=1,\dots,n.
	\]
	Hence
	\[
	|X_a\psi|^2 = |X_a u|^2 + u^2 |X_a\phi|^2
	\]
	and, using the pointwise norm \eqref{eq:gradnorm},
	\[
	\|\nabla_h\psi\|_h^2=
		\sum_{a=1}^n |X_a\psi|^2 = \sum_{a=1}^n \bigl(|X_a u|^2 + u^2 |X_a\phi|^2\bigr)
			= \|\nabla_h u\|_h^2 + u^2 \|\nabla_h\phi\|_h^2.
	\]
	This pointwise decomposition is the (local) Madelung identity. It holds first for smooth
	$\psi$, and by density extends to $\psi \in H^1_0(B_\Sigma(p,r))$.	 	
	Inserting it into the variance formula \eqref{eq:VarDecomposition} yields the exact decomposition
	\begin{equation}\label{eq:VarX}
		\mathrm{Var}_{\psi}(P)
		= \hbar^{2}\!\int_{B_\Sigma(p,r)}
		\big(\|\nabla^{h}u\|_h^{2} + u^{2}\|\nabla^{h}\phi\|_h^{2}\big)\,d\mu_h
		\;-\; \hbar^{2}\sum_{a=1}^{n}
		\Big(\,\int_{B_\Sigma(p,r)} u^{2}\, X_a \phi \, d\mu_h\Big)^{\!2}.
	\end{equation}
	Since $\|\psi\|_{L^2(B_\Sigma(p,r),d\mu_h)} {=} 1$ and $u {=} |\psi|$, the weight
	$u^2\,d\mu_h {=} |\psi|^2\,d\mu_h$ corresponds to the quantum mechanical Born probability measure on $B_\Sigma(p,r)$ introduced at the end of Sec.\,\ref{sub:geoball}.
	Applying the Cauchy--Schwarz inequality with respect to this probability measure, we obtain
	\begin{equation}
		\sum_{a=1}^n \Bigl( \int_{B_\Sigma(p,r)} u^2 X_a\phi \,d\mu_h \Bigr)^2
		\;\le\; \int_{B_\Sigma(p,r)} u^2 \|\nabla_h\phi\|_h^2 \,d\mu_h.
	\end{equation}
	Thus the last term in \eqref{eq:VarX} is bounded above by
	$\hbar^2 \int u^2 \|\nabla_h\phi\|_h^2\,d\mu_h$, and inserting this estimate into
	\eqref{eq:VarX} yields the lower bound
	\begin{equation}
		\mathrm{Var}_\psi(P)
		\;\ge\; \hbar^2 \int_{B_\Sigma(p,r)} \|\nabla_h u\|_h^2 \,d\mu_h
	\end{equation}
	With $u=|\psi|$ this yields
	\begin{equation}\label{eq:VarBound}
		\mathrm{Var}_{\psi}(P) \;\ge\; \hbar^{2}\,\|\nabla^{h}|\psi|\,\|_{L^{2}}^{2}.
	\end{equation}
	Setting $d\nu := u^{2}\,d\mu_h$ and, for any scalar field $X$,
	\begin{equation}
		E_{\nu}[X] := \int_{B_\Sigma(p,r)} X \, d\nu ,
	\end{equation}
	we rewrite \eqref{eq:VarX} in terms of $\nu$-expectations as
	\begin{equation}\label{eq:diff0}
		\mathrm{Var}_\psi(P)
		= \hbar^2 \|\nabla^h u\|_{L^2}^2
		+ \hbar^2\Big( E_\nu[\|\nabla^h\phi\|_h^2] - \sum_{a=1}^n (E_\nu[X_a\phi])^2 \Big).
	\end{equation}
	Here and in the sequel, “a.e.” (“almost everywhere”) is always understood with respect to
	the relevant underlying measure; in particular, “$\nu$-a.e.” refers to the probability
	measure $d\nu = u^2\,d\mu_h$ on $B_\Sigma(p,r)$.
	Invoking \eqref{eq:gradnorm}, the difference of expectations simplifies to
	\begin{eqnarray}\label{eq:diff}
		E_\nu[\|\nabla^h\phi\|_h^2] - \sum_{a=1}^n (E_\nu[X_a\phi])^2
		=  \sum_{a=1}^n E_\nu\!\left[\, \big(X_a\phi -E_\nu[X_a\phi]\big)^{\!2}\,  \right].
	\end{eqnarray}
	In \eqref{eq:diff} the fluctuation term is written as a sum of variances of the scalar fields
	$X_a\phi$ with respect to the probability measure $d\nu$. One may rewrite this
	sum as the squared $h$-norm of a \emph{centered} gradient vector and then take the $\nu$-expectation. Concretely,
	\[
	\sum_{a=1}^n E_\nu\Big[\big(X_a\phi - E_\nu[X_a\phi]\big)^{\!2}\Big]
	= E_\nu\!\left[\sum_{a=1}^n \big(X_a\phi - E_\nu[X_a\phi]\big)^{\!2}\right]
	= E_\nu\!\left[\;\|\,\nabla^h\phi - E_\nu[\nabla^h\phi]\,\|_h^{2}\,\right].
	\]
	The first equality is linearity of $E_\nu[\,\cdot\,]$. The second uses the orthonormality of the frame $\{X_a\}_{a=1}^n$ together with the definition of the pointwise $h$-norm in \eqref{eq:gradnorm}. With \eqref{eq:diff0} and \eqref{eq:diff} the variance admits the exact decomposition
	\begin{equation}\label{eq:VarE2}
		\mathrm{Var}_\psi(P)
		= \hbar^2 \|\nabla^h u\|_{L^2}^2
		+ \hbar^2 E_\nu\!\left[\,\|\,\nabla^h \phi - E_\nu[\nabla^h \phi]\,\|_h^{2}\,\right],
	\end{equation}
	for $\psi= u\, e^{i\phi}$. Hence, equality in \eqref{eq:VarBound} holds precisely when the gradient $\nabla^h\phi$ is $\nu$-a.e.\ constant. This is the \emph{phase} part of the equality condition in Theorem\,\ref{thm:main}; together with $|\psi|$ being a first Dirichlet eigenfunction it yields equality in \eqref{eq:IHB}.
	By the Dirichlet Rayleigh--Ritz characterization \eqref{eq:lambda1} (see \cite[Ch.~3]{Chavel1984}), for any $0\neq u\in H^{1}_{0}(B_{\Sigma}(p,r))$,
	\begin{equation}\label{eq:Rayleigh}
		\int_{B_{\Sigma}(p,r)}\|\nabla^{h}u\|_h^{2}\,\mathrm d\mu_{h}
		\;\ge\; \lambda_{1}\!\left(B_{\Sigma}(p,r);h\right)\int_{B_{\Sigma}(p,r)}|u|^{2}\,\mathrm d\mu_{h}.
	\end{equation}
	Taking $u = |\psi|$ with $\|\psi\|_{L^2} = 1$ (hence $\|u\|_{L^2} = 1$) and combining
	\eqref{eq:VarBound} with \eqref{eq:Rayleigh}, we obtain the chain
	\begin{equation}
	\sigma_{p}(\psi)^{2}
	= \mathrm{Var}_{\psi}(P)
	\;\ge\;\hbar^{2}\!\int_{B_{\Sigma}(p,r)}\|\nabla^{h}|\psi|\,\|_h^{2}\,\mathrm d\mu_{h}
	\;\ge\;\hbar^{2}\lambda_{1}\!\left(B_{\Sigma}(p,r);h\right).
	\end{equation}
	Taking square roots gives the desired estimate $\sigma_p(\psi) \ge \hbar
	\sqrt{\lambda_1}$, i.e. \eqref{eq:IHB}.\hfill$\square$
\medskip
	\begin{remark}\label{rem:kinetic-window}
	Coordinate invariance and kinetic-energy window.
	The estimate is purely intrinsic to the Riemannian data $(B_\Sigma(p, r), h)$:
	neither coordinates nor the ambient foliation play any role in its formulation or
	proof.
	The kinetic-energy window follows directly from \eqref{eq:VarDecomposition} and the exact decomposition
	\eqref{eq:VarE2}. On the one hand, \eqref{eq:VarDecomposition} reads
	\[
	\mathrm{Var}_\psi(P)
	= \hbar^2 \|\nabla_h \psi\|_{L^2}^2 - \|\langle P\rangle\|_h^2
	\le \hbar^2 \|\nabla_h \psi\|_{L^2}^2,
	\]
	since $\|\langle P\rangle\|_h^2 \ge 0$. On the other hand, the expectation in term in \eqref{eq:VarE2} is nonnegative, so
	\[
	\mathrm{Var}_\psi(P) \ge \hbar^2 \|\nabla_h u\|_{L^2}^2
	= \hbar^2 \|\nabla_h|\psi|\|_{L^2}^2.
	\]
	Taking square roots in the two-sided inequality for $\mathrm{Var}_\psi(P)$ gives
	\begin{equation}\label{eq:chain}
		\hbar\|\nabla^{h}|\psi|\|_{L^2}\;\le\;\sigma_{p}(\psi)\;\le\;\hbar\|\nabla^{h}\psi\|_{L^2}.
	\end{equation}
	which is the claimed kinetic-energy window.
	Equality at the upper endpoint holds precisely when the mean momentum vanishes,
	$\|\langle P\rangle\|_h = 0$ (equivalently, $\langle P\rangle = 0$), so that the
	subtracted term in \eqref{eq:VarDecomposition} disappears. Equality at the lower endpoint holds precisely
	when the fluctuation term in \eqref{eq:VarE2} vanishes, i.e. when $\nabla_h\phi$ is constant
	($\nu$-a.e.). If this lower-endpoint equality holds but $|\psi|$ is not a
	ground-state Dirichlet eigenfunction, then the Rayleigh quotient is strict and the
	bound \eqref{eq:IHB} is correspondingly strict: $\sigma_p(\psi) > \hbar\sqrt{\lambda_1}$.
	\end{remark}
\medskip
	\begin{remark}\label{rem:phase-const} 
		On the ‘constant phase gradient’ condition and frame independence. 
		By the exact variance decomposition \eqref{eq:VarE2},
		equality at the \emph{lower} endpoint of the kinetic-energy window \eqref{eq:chain}
		(equivalently of \eqref{eq:VarBound}) holds if and only if the $\nu$-variance of the phase gradient vanishes:
		\[
		E_\nu\!\big[\;\|\nabla^h\phi - E_\nu[\nabla^h\phi]\|_h^2\;\big]=0
		\quad\Longleftrightarrow\quad
		\nabla^h\phi(x)=E_\nu[\nabla^h\phi]\ \ \text{for $\nu$-a.e.\ $x\in B_\Sigma(p,r)$}.
		\]
		In a global $h$-orthonormal frame $\{X_a\}_{a=1}^n$ this is equivalent to the existence of constants $c_a$
		with $X_a\phi(x)=c_a$ for $\nu$-a.e.\ $x$.
		
		\medskip
		\noindent\emph{Global frame and invariance.}
		Since $r{<}\operatorname{inj}_\Sigma(p)$, the ball $B_\Sigma(p,r)$ is contractible. Parallel transport of an orthonormal basis at $p$ along radial geodesics furnishes a global $h$-orthonormal frame. The quantity $E_\nu[\|\nabla^h\phi - E_\nu[\nabla^h\phi]\|_h^2]$ and the condition above are invariant under
		\emph{constant} $O(n)$ rotations of this fixed global frame (not under $x$-dependent frame rotations).
		
		\medskip
		\noindent\emph{Realizability / integrability.}
		Writing $v:=\sum_{a=1}^n c_a\,X_a$, the condition above requires $\nabla^h\phi=v$. Such a phase exists if and only $\nabla_i v_j-\nabla_j v_i=0$, for all $i,j$.
		On curved balls this typically fails; hence equality in \eqref{eq:IHB} is often \emph{not} attained (the bound is sharp but strict). In the Euclidean ball $B_r(0)\subset\mathbb{R}^n$, however, $\nabla_i v_j=\nabla_j v_i$ for constant $v$, and affine phases $\phi(x)=k\!\cdot\!x$ realize the lower endpoint together with the Dirichlet ground-state modulus.
	\end{remark}
 	\begin{remark}\label{rem:saturation}
	Example saturating the bound with nonzero mean momentum.
		Let $B_r(0)\subset\mathbb{R}^n$ carry the Euclidean metric $h$. Let $u_1>0$ denote the $L^2$-normalized first Dirichlet eigenfunction of $-\Delta_h$ on $B_r(0)$, with eigenvalue $\lambda_1(B_r(0);h)=: \lambda_1$. For fixed $k\in\mathbb{R}^n$ set
		\[
		\psi(x):=u_1(x)\,e^{i k\cdot x},\qquad x\in B_r(0).
		\]
		Writing $\psi=u_1 e^{i\phi}$ with $|\psi|=u_1$ and $\phi(x)=k\cdot x$, we have $\nabla^{h}\phi\equiv k$, hence each component $X_a\phi$ is constant. In the exact variance decomposition \eqref{eq:VarE2}, the fluctuation term vanishes and the lower endpoint of the kinetic-energy window \eqref{eq:chain} is attained:
		\[
		\sigma_p(\psi)=\hbar\,\|\nabla^{h} u_1\|_{L^2}=\hbar\sqrt{\lambda_1}.
		\]
		At the same time the mean momentum equals $\langle P\rangle=\hbar k$, which is nonzero unless $k=0$. Thus equality in the intrinsic Heisenberg bound \eqref{eq:IHB} may hold even when the mean momentum does not vanish; it suffices that the phase gradients are (a.e.) constant together with $|\psi|$ being a first Dirichlet eigenfunction.
	\end{remark}
	
	\subsection{Universal Hardy baseline on weakly mean-convex balls}\label{subsec:Hardy}
	
	Informally, the boundary $\partial B_{\Sigma}(p,r)$ of the geodesic ball is \emph{weakly mean-convex} if it never “bulges outward” with negative mean curvature: the mean curvature with respect to the outward unit normal is nonnegative in the distributional sense, which in this paper is \emph{exactly equivalent} to the boundary-distance\footnote{Here \(\operatorname{dist}_{h}\) denotes the geodesic (Riemannian) distance on \(\Sigma\) induced by \(h\).}
	function $d(x)=\operatorname{dist}_{h}(x,\partial B_{\Sigma}(p,r))$ being superharmonic on $B_{\Sigma}(p,r)$. This is the only extra hypothesis beyond Assumption~\ref{ass:min} used to obtain the sharp boundary-distance Hardy estimate \eqref{eq:Hardy}, and -- since $d(x)\le r$ a.e. -- the universal baseline product \eqref{eq:HardyBaseline} recorded in Corollary~\ref{cor:Hardy}.
	
	\begin{definition}\label{def:weakly} [Weakly mean-convex domain]
		Let $(\Sigma,h)$ be a Riemannian manifold and $\Omega\subset\Sigma$ a bounded $C^{2}$ domain. We say that $\Omega$ is \emph{weakly mean-convex} if the mean curvature $H$ of $\partial\Omega$
		with respect to the outward unit normal is nonnegative in the distributional sense. Equivalently,
		the boundary-distance function $d(x)=\operatorname{dist}_{h}(x,\partial\Omega)$ is
		distributionally superharmonic on $\Omega$ (i.e., $-\Delta_{h}d\geq 0 $)
		-- see \cite[Thm.~1.2]{LewisLiLi2011} and \cite{DAmbrosioDipierro2014}.
	\end{definition}
	
	\begin{corollary}[Hardy baseline: optimal, not attained]\label{cor:Hardy}
		Assume, in addition to Assumption~\ref{ass:min}, that the boundary \(\partial B_{\Sigma}(p,r)\) is weakly mean-convex. Then, for every normalized \(\psi\in H^{1}_{0}\!\left(B_{\Sigma}(p,r)\right)\), the universal product bound holds:
		\begin{equation}\label{eq:HardyBaseline}
			\sigma_{p}(\psi)\,r\;\ge\;\frac{\hbar}{2}.
		\end{equation}
		The constant \(1/2\) is optimal, and the bound in \eqref{eq:HardyBaseline} is never attained.
	\end{corollary}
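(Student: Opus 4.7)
The strategy is to chain the intrinsic spectral bound of Theorem~\ref{thm:main} with a sharp boundary-distance Hardy inequality available on weakly mean-convex domains. By \eqref{eq:IHB} we already have $\sigma_p(\psi)\ge \hbar\sqrt{\lambda_1(B_\Sigma(p,r);h)}$, so it suffices to establish the spectral lower bound $\lambda_1(B_\Sigma(p,r);h)\ge 1/(4r^2)$ and take square roots. The entire geometric content of the corollary is thus concentrated in a single inequality for the first Dirichlet eigenvalue.

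For the spectral step I would invoke the Hardy inequality
\begin{equation*}
	\int_{B_\Sigma(p,r)} \|\nabla^h u\|_h^{2}\,d\mu_h \;\ge\; \frac{1}{4}\int_{B_\Sigma(p,r)} \frac{|u|^{2}}{d^{2}}\,d\mu_h,\qquad u\in H^{1}_{0}(B_\Sigma(p,r)),
\end{equation*}
with $d(x):=\operatorname{dist}_h(x,\partial B_\Sigma(p,r))$; this is exactly the form valid with sharp constant $1/4$ under the weak mean-convexity hypothesis $-\Delta_h d\ge 0$ (distributionally), per \cite{LewisLiLi2011,DAmbrosioDipierro2014,BarbatisReview}. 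The weak mean-convexity assumption is used here and only here. Since $d(x)\le r$ pointwise, one has $1/d^2\ge 1/r^2$, and the Rayleigh quotient characterization \eqref{eq:lambda1} yields $\lambda_1(B_\Sigma(p,r);h)\ge 1/(4r^2)$. Inserting this into \eqref{eq:IHB} and multiplying by $r$ gives the universal product bound \eqref{eq:HardyBaseline}.

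For the two auxiliary claims, the optimality of the constant $1/2$ is inherited from the optimality of the Hardy constant $1/4$, which is classical and cannot be improved without invoking curvature information beyond the sign condition $-\Delta_h d\ge 0$. Non-attainment in \eqref{eq:HardyBaseline} follows from two independent strictness mechanisms: first, $d<r$ on a set of full measure, so the pointwise reduction $1/d^2\ge 1/r^2$ is strict when integrated against any nonzero $|u|^2$; second, the sharp Hardy constant $1/4$ admits no $H^1_0$-extremizer, since the formal extremals scale like $\sqrt{d}$ near $\partial B_\Sigma(p,r)$ and fail to lie in $H^1_0$. Either observation alone forces strict inequality. The main technical delicacy is applying the Hardy inequality in its correct distributional formulation, since $d$ is not globally smooth across the cut locus of $\partial B_\Sigma(p,r)$; the cited literature is tailored to precisely this non-smooth setting, so no separate regularization argument is needed.
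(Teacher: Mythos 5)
Your proposal is correct and follows essentially the same route as the paper: the sharp boundary-distance Hardy inequality under weak mean-convexity, the pointwise reduction $d(x)\le r$ to get $\lambda_{1}\ge 1/(4r^{2})$, and then the chain through Theorem~\ref{thm:main}. Your added remarks on non-attainment (strictness of $d<r$ and the absence of an $H^1_0$ Hardy extremizer) are consistent with the paper's appeal to the classical features of Hardy's inequality.
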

	
	\begin{proof}
		On weakly mean-convex $C^2$ domains the boundary-distance function is superharmonic
		in the distributional sense, and the sharp boundary-distance Hardy inequality holds: for every $u \in H^1_0(\Omega)$,		
		\[
		\int_\Omega \|\nabla^h u\|_h^2\,d\mu_h \;\ge\; \frac14
		\int_\Omega \frac{|u|^2}{d(x)^2}\,d\mu_h,
		\]
		where $d(x) := \operatorname{dist}_h(x,\partial\Omega)$ (see \cite{LewisLiLi2011,DAmbrosioDipierro2014,BarbatisReview}).
		Specialized to $\Omega = B_\Sigma(p,r)$ this gives
		\begin{equation}\label{eq:Hardy}
			\int_{B_{\Sigma}(p,r)} \|\nabla^{h}u\|^{2}_h\,\mathrm d\mu_{h}
			\;\ge\; \frac{1}{4}\int_{B_{\Sigma}(p,r)} \frac{|u|^{2}}{d(x)^{2}}\,\mathrm d\mu_{h},
		\end{equation}
		where \(d(x)=\operatorname{dist}_{h}\!\bigl(x,\partial B_{\Sigma}(p,r)\bigr)\). Since \(d(x)\le r\) for almost every \(x\), \eqref{eq:Hardy} implies
		\begin{equation}
			\int_{B_{\Sigma}(p,r)} \|\nabla^{h}u\|^{2}_h\,\mathrm d\mu_{h}
			\;\ge\; \frac{1}{4r^{2}}\int_{B_{\Sigma}(p,r)} |u|^{2}\,\mathrm d\mu_{h}.
		\end{equation}
		Taking the infimum over \(u\in H^{1}_{0}\!\left(B_{\Sigma}(p,r)\right)\setminus\{0\}\) yields
		\(\lambda_{1}\!\left(B_{\Sigma}(p,r);h\right)\ge 1/(4r^{2})\).
		Combining this with the intrinsic lower bound \eqref{eq:IHB} from Theorem~\ref{thm:main}
		produces the product estimate \eqref{eq:HardyBaseline}. Optimality of the constant and non-attainment are classical features of Hardy’s inequality and persist on mean-convex domains; see, e.g., \cite{BarbatisReview}.
	\end{proof}
	
	\paragraph{Compatibility with curvature-dependent bounds on spheres.}
	Consider a slice \((\Sigma,h)\) of constant sectional curvature \(\kappa>0\) (e.g.\ the round \(S^3\)). For geodesic balls \(B_\Sigma(p,r)\) one has weak mean-convexity for \(0<r\le \pi/(2\sqrt{\kappa})\);
	hence Corollary~\ref{cor:Hardy} applies and yields the universal baseline
	 \eqref{eq:HardyBaseline}. By contrast, the curvature-dependent estimate in \cite{Schuermann2018}, for $\kappa>0$,
	\begin{equation}\label{eq:Schuermann19}
		\sigma_p(\psi)\, r \;\ge\; \pi \hbar \,\sqrt{\,1 - \frac{\kappa\,}{\pi^{2}}\, r^{2}\,}\,,
	\end{equation}
	is valid on the larger interval \(0<r<\pi/\sqrt{\kappa}\) and decreases to \(0\) as \(r\to \pi/\sqrt{\kappa}\) (antipodal radius), where mean-convexity fails. On the overlap \(0<r\le \pi/(2\sqrt{\kappa})\) the bound \eqref{eq:Schuermann19} is strictly stronger than \eqref{eq:HardyBaseline}; for instance, at \(r=\pi/(2\sqrt{\kappa})\) it yields \(\sigma_p\,r\ge \pi\hbar\sqrt{3}/2\gg \hbar/2\).
	Thus there is no contradiction: \eqref{eq:HardyBaseline} is a foliation-independent, boundary-driven floor available under mean-convexity, while \eqref{eq:Schuermann19} is a curvature-sensitive refinement that extends beyond the convexity radius.
	
	\subsection{Barta-type improvement under weak mean-convexity}\label{subsec:Barta}
	
	The universal Hardy baseline already enforces the scale-invariant floor \eqref{eq:HardyBaseline} under the weak mean-convexity hypothesis in Assumption~\ref{ass:min}. In this subsection we sharpen that floor using a vector-field version of Barta's method \cite{BessaMontenegro}. The idea is entirely intrinsic to \((\Sigma,h)\): one tests the Dirichlet form with a radially aligned drift field adapted to the distance function, so that only the sign of the radial Laplacian enters. Combined with the intrinsic lower bound of Theorem~\ref{thm:main}, this yields a factor-\(\pi\) improvement of the product inequality, recorded as \eqref{eq:barta}. The proof is short and foliation-independent, requires no symmetry or vacuum assumptions, and rests on the Barta identity in the formulation of \cite{BessaMontenegro}. We state the result next as Corollary~\ref{cor:barta}.
	
	\begin{corollary}[Barta-type improvement under weak mean-convexity]\label{cor:barta}
		Assume that ${B=B_\Sigma(p,r)}$ is weakly mean-convex,  
		equivalently $\Delta \rho\ge 0$ on $B\!\setminus\!\{p\}$,
		where $\rho(x) := \mathrm{dist}_h(x,p)$. Then
		\begin{equation}\label{eq:barta}
			\sigma_p(\psi)\,r\ \ge\ \frac{\pi\hbar}{2}\, 
		\end{equation}
		for all $\psi\in H^1_0(B), \|\psi\|_2=1$.
	\end{corollary}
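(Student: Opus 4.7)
The plan is to reduce \eqref{eq:barta} to the sharp Dirichlet eigenvalue bound
\begin{equation*}
\lambda_1\!\left(B_\Sigma(p,r);h\right)\;\ge\;\left(\tfrac{\pi}{2r}\right)^{\!2},
\end{equation*}
so that Theorem~\ref{thm:main} yields $\sigma_p(\psi)\ge \hbar\sqrt{\lambda_1}\ge \pi\hbar/(2r)$, which is exactly the claimed product inequality. The eigenvalue estimate will come from Barta's principle in the vector-field formulation of \cite{BessaMontenegro}: with the standard choice $X=-\nabla^h\ln f$ for a positive test function $f\in C^2(B)$ vanishing on $\partial B$, the Bessa--Montenegro inequality reduces to $\lambda_1(B;h)\ge \inf_B(-\Delta f/f)$.

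My test function will be the radial cosine profile
\begin{equation*}
f(x)\;:=\;\cos\!\left(\frac{\pi\,\rho(x)}{2r}\right),\qquad \rho(x):=\mathrm{dist}_h(x,p).
\end{equation*}
Because $r<\mathrm{inj}_\Sigma(p)$, the distance $\rho$ is smooth on $B\setminus\{p\}$ with $|\nabla^h\rho|_h=1$; and since $t\mapsto\cos(\sqrt{t})$ is entire, the composite $f$ extends smoothly across $p$, is strictly positive on $B$, and vanishes on $\partial B$. For a radial function the chain rule gives
\begin{equation*}
-\Delta f\;=\;-f''(\rho)-f'(\rho)\,\Delta\rho\;=\;\left(\tfrac{\pi}{2r}\right)^{\!2} f\;+\;\tfrac{\pi}{2r}\,\sin\!\bigl(\tfrac{\pi\rho}{2r}\bigr)\,\Delta\rho .
\end{equation*}
On $B\setminus\{p\}$ both $\sin(\pi\rho/(2r))$ and (by hypothesis) $\Delta\rho$ are nonnegative, so dividing by $f>0$ yields $-\Delta f/f\ge (\pi/(2r))^{2}$ pointwise. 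At the apex $p$, radial smoothness gives $\Delta f(p)=n\,f''(0)=-n(\pi/(2r))^{2}$, so the same inequality persists there, with an even larger constant. Barta's inequality then supplies the eigenvalue bound and Theorem \ref{thm:main} closes the argument.

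The main technical point I anticipate is that $\Delta\rho\ge 0$ is available only on $B\setminus\{p\}$ (and, in the underlying definition of weak mean-convexity, only distributionally on $B$), while the drift $X=-\nabla^h\ln f$ inherits the Lipschitz singularity of $\rho$ at the apex. This is precisely the setting handled by the vector-field Barta principle of \cite{BessaMontenegro}; alternatively, one may exhaust $B$ by concentric subballs $B_\Sigma(p,r-\varepsilon)$, apply the smooth Barta comparison there, and pass to the limit using domain monotonicity of $\lambda_1$ together with the continuity of $f$ up to $\partial B$. No curvature, symmetry, or foliation input is required at any stage, which is consistent with the foliation-independent character announced in the subsection's preamble.
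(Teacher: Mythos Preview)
Your argument is correct and is essentially the paper's own proof in disguise: the paper works directly with the vector field $X=-\alpha\tan(\alpha\rho)\,\nabla\rho$ (with $\alpha=\pi/(2r)$) and computes $-\mathrm{div}X-|X|^2=\alpha^2+\alpha(\Delta\rho)\tan(\alpha\rho)\ge\alpha^2$, while you take the test-function route with $f=\cos(\alpha\rho)$ and compute $-\Delta f/f=\alpha^2+\alpha(\Delta\rho)\tan(\alpha\rho)$; since $X=\nabla\ln f$ for this $f$, the two computations are literally the same identity. One small slip: the reduction of the Bessa--Montenegro inequality to the classical Barta bound uses $X=\nabla\ln f$, not $X=-\nabla\ln f$ (with your sign one gets $-\mathrm{div}X-|X|^2=\Delta f/f-2|\nabla f|^2/f^2$ instead of $-\Delta f/f$); this does not affect your argument because you compute $-\Delta f/f$ directly and invoke Barta's inequality in its scalar form.
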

	
	\begin{proof}
		We use the vector-field version of Barta's method from \cite{BessaMontenegro}. For any $X\in C^1(B,T\Sigma)$ and any $u\in H^1_0(B)$ one has the identity
		\begin{equation}\label{eq:barta-identity}
			\int_B |\nabla u|^2
			=\int_B \bigl|\nabla u - Xu\bigr|^2 - \int_B (\mathrm{div}X+|X|^2)\,u^2,
		\end{equation}
		hence 
		\begin{equation}\label{eq:BartaVec}
			\int_B |\nabla u|^2\ \ge\ \int_B \bigl(-\mathrm{div}X-|X|^2\bigr)\,u^2
			\quad\Longrightarrow\quad
			\lambda_1(B)\ \ge\ \inf_B\bigl(-\mathrm{div}X-|X|^2\bigr).
		\end{equation}
		Choose $\alpha:=\pi/(2r)$ and the radial field
		\begin{equation*}
			X\ :=\ -\,\alpha\,\tan\bigl(\alpha\,\rho\bigr)\,\nabla \rho.
		\end{equation*}
		Using $\mathrm{div}(\phi(\rho)\nabla \rho)=\phi'(\rho)+\phi(\rho)\,\Delta \rho$ (valid on $B\setminus\{p\}$; a standard cut-off around $p$ justifies integration), and $|\nabla \rho|=1$, we compute
		\begin{equation*}
			-\mathrm{div}X-|X|^2
			=\alpha^2+\alpha\,(\Delta \rho)\,\tan(\alpha \rho).
		\end{equation*}
		Since $0\le \alpha \rho<\tfrac{\pi}{2}$ on $B$ and $\tan(\alpha \rho)\ge 0$, the weak mean-convexity hypothesis $\Delta \rho\ge 0$ yields
		\begin{equation*}
			-\mathrm{div}X-|X|^2\ \ge\ \alpha^2\ =\ \frac{\pi^2}{4r^2}.
		\end{equation*}
		Taking the infimum over $B$ gives $\lambda_1(B)\ge \pi^2/(4r^2)$. The second claim then follows from Theorem\,\ref{thm:main}.
	\end{proof}
	
	\begin{remark} Context for the bounds \eqref{eq:HardyBaseline} and \eqref{eq:barta}.
		Under Assumption~\ref{ass:min}, the identity $d=r-\rho$ reduces the boundary-distance Hardy hypothesis to the radial condition used in the Barta argument. Thus the improvement from the universal Hardy baseline \eqref{eq:HardyBaseline} to the Barta-type estimate \eqref{eq:barta} comes from the vector-field method and choice of $X$, not from a stronger geometric assumption. This improvement is scale-invariant and uses only the sign of \(\Delta \rho\), hence it is robust, geometry- and foliation-independent within the present framework. In minimax sense and three dimensions 
		(see, e.g., the discussion in \cite{BessaMontenegro}), \eqref{eq:barta} is best possible under exactly the hypotheses of Corollary~\ref{cor:barta}; any further strengthening requires additional information (e.g.\ curvature/comparison hypotheses), cf.\ \cite{BessaMontenegro,Cheng1975} or  curvature-dependent refinement \cite{Schuermann2018}.
	\end{remark}
	
	\section{Examples and applications}\label{sec:Examples}
	
	An illuminating perspective on the geometric potential $V$ arises when the Riemannian manifold $(\Sigma,h)$ is constructed from a Lie algebra. Instead of deriving the frame from a metric, one specifies the commutation relations of an orthonormal frame $\{X_a\}_{a=1}^n$,
	\begin{equation}
		[X_a, X_b] = \sum_{c=1}^n C_{ab}^c X_c,
	\end{equation}
	where $C_{ab}^c$ are the structure constants satisfying the Jacobi identity. This construction yields a Lie group (or a quotient thereof) equipped with a left-invariant metric $h(X_a, X_b)=\delta_{ab}$. In this algebraic setting, the divergence of the frame fields is constant and determined by the trace of the structure constants (adjoint representation),
	\begin{equation}
		f_a = \mathrm{div}_h X_a = - \sum_{b=1}^n C_{ab}^b .
	\end{equation}
	Since $f_a$ is constant, the derivative term in \eqref{eq:Vpotential} vanishes, and the geometric potential becomes a pure algebraic invariant:
	\begin{equation}
		V = \frac{1}{4} \sum_{a=1}^n f_a^2 \;>\;0.
	\end{equation}
	This classification creates a dichotomy between unimodular and non-unimodular geometries, which directly impacts the intrinsic momentum uncertainty.
	
	\paragraph{Example 4.1} [Category A: Unimodular geometry / Heisenberg group]\\	
	Consider the 3-dimensional Heisenberg algebra $\mathfrak{h}_3$ generated by $\{X_1, X_2, X_3\}$ with the non-vanishing commutator
	\begin{equation}
		[X_1, X_2] = X_3.
	\end{equation}
	A concrete realization of this algebra is provided by the \textit{Nil geometry}, one of the eight Thurston model geometries. On the manifold $\Sigma \cong \mathbb{R}^3$ with coordinates $(x,y,z)$, the orthonormal frame generators are explicitly given by:
	\begin{equation}
		X_1 = \partial_x - \frac{1}{2}y \partial_z, \quad
		X_2 = \partial_y + \frac{1}{2}x \partial_z, \quad
		X_3 = \partial_z.
	\end{equation}
	The Riemannian metric $h$ that renders this frame orthonormal is the left-invariant metric
	\begin{equation}
		ds^2_h = dx^2 + dy^2 + \left(dz + \frac{1}{2}(y dx - x dy)\right)^2.
	\end{equation}
	Returning to the algebraic perspective, the only non-zero structure constants are $C_{12}^3 = -C_{21}^3 = 1$. Computing the divergences $f_a = -\sum_b C_{ab}^b$ yields
	$f_1 = f_2 = f_3 = 0$.
	Consequently, the algebra is unimodular, and the geometric potential vanishes identically, $V \equiv 0$. 
	
	Although the geometric potential $V$ vanishes identically, the geometry is non-flat: the scalar curvature is constant and negative, $\mathrm{Scal} = -\tfrac{1}{2}$. Moreover, for the maximally symmetric left-invariant Riemannian Nil metric considered here, the injectivity radius is finite and spatially constant, $\operatorname{inj}_{\Sigma}(p) \equiv 2\pi$ for all $p \in \Sigma$. Consequently, the strict localization assumption (Assumption~\ref{ass:min}) imposes a specific upper bound on the size of the geodesic ball in this geometry, requiring 
	\[
		0 < r < 2\pi\, .
	\]
	The value of the injectivity radius and its point-independence follow from the explicit description of the geodesic flow, conjugate and cut loci for left-invariant Riemannian metrics on the Heisenberg groups in Biggs and Nagy~\cite{BiggsNagy2016}.

	\paragraph{Example 4.2} [Category B: 3D Hyperbolic Space $\mathbb{H}^3$]\\
	An analytically tractable example of a non-unimodular geometry is the three-dimensional hyperbolic space $\mathbb{H}^3$ with curvature $\kappa=-1$. We model $\mathbb{H}^3$ as the group manifold generated by two translations $\{X_1, X_2\}$ and one dilation $X_3$, satisfying
	\begin{equation*}
		[X_3, X_1] = X_1, \qquad [X_3, X_2] = X_2, \qquad [X_1, X_2] = 0.
	\end{equation*}
	The non-vanishing structure constants are $C_{31}^1 = C_{32}^2 = 1$. Summing over the adjoint indices yields the divergence vector of the frame:
	$f_3 = - (C_{31}^1 + C_{32}^2) = -2$ and $f_1 = f_2 = 0.$
	Inserting this into the definition of the geometric potential \eqref{eq:Vpotential} gives the constant algebraic shift \[V= \frac{1}{4} f_3^2 = 1\, .\]
	Conversely, consider the Dirichlet problem on a geodesic ball of radius $R$ in $\mathbb{H}^3$. The radial Laplacian $\Delta_r = \partial_r^2 + 2\coth(r)\partial_r$ transforms via $u(r) = \chi(r)/\sinh(r)$ into the Schrödinger operator $-\partial_r^2 + 1$. The ground state $\chi(r) = \sin(\pi r/R)$ corresponds to the eigenvalue 
	\begin{equation*}
		\lambda_1(\mathbb{H}^3, R) = \frac{\pi^2}{R^2}+1.
	\end{equation*}
	The algebraic potential $V=1$ exactly matches the spectral gap $\lim_{R\to\infty} \lambda_1 = 1$. The intrinsic Heisenberg bound thus takes the explicit form \cite{Schuermann2018}
	\begin{equation*}
		\sigma_p(\psi)^2 \;\ge\; \hbar^2 \left( \frac{\pi^2}{R^2}+1 \right),
	\end{equation*}
	demonstrating how the non-unimodular algebra enforces a strictly positive lower bound on momentum uncertainty even in the infinite-volume limit. 
	Using the variance decomposition \eqref{eq:VarGeom}, we can isolate the naive variance measured by the momentum operators in this specific frame. For the ground state, we obtain:
	\begin{align*}
		\mathrm{Var}^{\mathrm{naive}}_\psi(P) 
		&\;=\; \hbar^2 \frac{\,\pi^2}{R^2}.
	\end{align*}
	Strikingly, the naive variance recovers the exact Euclidean scaling $1/R^2$. The geometric potential $V$ acts as a filter that subtracts the spectral gap (the pure curvature energy) from the total intrinsic uncertainty, leaving behind the pure confinement energy associated with the ball size $R$.
	The injectivity radius is infinite, $\operatorname{inj}_{\Sigma} = \infty$, implying that the radius $r$ of the geodesic ball is unconstrained.

	\paragraph{Example 4.3} [Non-constant curvature: ``Witten's Cigar'']\\
	To demonstrate the scope of the intrinsic bound beyond homogeneous spaces, we consider the 2-dimensional manifold $\Sigma \cong \mathbb{R}^2$ equipped with the rotationally symmetric metric
	\begin{equation}
		h \;=\; \mathrm{d}r^2 + \tanh^2(r)\,\mathrm{d}\phi^2, \qquad r\ge 0,\ \phi\in[0,2\pi).
	\end{equation}
	This geometry, known as \emph{Witten's cigar} or the Euclidean 2D black hole, interpolates smoothly between a spherical cap of positive Gaussian curvature $K=2$ at the origin and a flat cylinder ($K\to 0$) at infinity. 
	A natural orthonormal frame $\{X_a\}$ is obtained by normalizing the coordinate basis:
	\begin{equation}
		X_1 = \partial_r, \qquad X_2 = \coth(r)\,\partial_\phi.
	\end{equation}
	Unlike in the Heisenberg or hyperbolic cases, the algebra of these fields is governed by position-dependent structure functions rather than constant coefficients. A direct computation of the Lie bracket yields
	\begin{equation}
		[X_1, X_2] \;=\; -\frac{2}{\sinh(2r)}\,X_2.
	\end{equation}
	The non-vanishing structure function $C_{12}^2(r) = -2/\sinh(2r)$ encodes the inhomogeneity of the curvature. 
	The divergences of the frame fields are $f_1 = \mathrm{div}_h X_1 = 2/\sinh(2r)$ and $f_2=0$. Consequently, the geometric potential $V$ defined in \eqref{eq:Vpotential} becomes a spatially varying field:
	\begin{equation}\label{eq:VCigar}
		V(r) \;=\; \frac{1}{2}\partial_r f_1 + \frac{1}{4}f_1^2 \;=\; \frac{1 - 2\cosh(2r)}{\sinh^2(2r)}.
	\end{equation}
	The potential \eqref{eq:VCigar} captures the global topology of the manifold:
	\begin{itemize}
		\item As $r\to \infty$, the potential decays exponentially, $V(r) \to 0$, consistent with the asymptotic flatness of the cylinder.
		\item As $r\to 0$, it diverges as $V(r) \sim -1/(4r^2)$. This singular behavior is not pathological but physical: it exactly cancels the centrifugal barrier term coming from the Laplacian in polar coordinates, ensuring that the effective Hamiltonian is regular at the origin (where the geometry closes smoothly like $\mathbb{R}^2$).
	\end{itemize}
	Finally, the first Dirichlet eigenvalue $\lambda_1$ on a ball $B_R(0)$ is determined by the zeros of the Legendre function $P_{-1/2+i\nu}(\cosh 2R)$. Theorem~\ref{thm:main} thus provides a strictly intrinsic lower bound $\sigma_p \ge \hbar\sqrt{\lambda_1}$ that automatically incorporates these non-trivial geometric features. 
	For a detailed comparison between the intrinsic lower bound derived in Theorem~\ref{thm:main} and classical spectral estimates in different radial regimes we consider the spectral asymptotics and sharpness of the bound:
	
	\begin{itemize}
		\item \textbf{Small-ball regime ($R \ll 1$).} 
		Near the origin, the geometry is dominated by the positive Gaussian curvature $K(0)=2$. The asymptotic expansion for small geodesic balls \cite{KarpPinsky1986} yields
		\begin{equation}
			\lambda_1(B_R) \;=\; \frac{j_{0,1}^2}{R^2} - \frac{K(0)}{3} + \mathcal{O}(R^2),
		\end{equation}
		where $j_{0,1} \approx 2.405$ is the first zero of the Bessel function $J_0$. This confirms that the positive curvature at the tip acts as an attractive potential, lowering the momentum uncertainty relative to the flat Euclidean case.
		\item \textbf{Large-ball regime ($R \to \infty$).} 
		As the radius increases, the manifold transitions into a cylinder $\mathbb{R} \times S^1$. The radial Laplacian simplifies asymptotically to $\partial_r^2$, effectively reducing the problem to a one-dimensional particle in a box of length $R$. Consequently, the eigenvalue scales as
		\begin{equation}
			\lambda_1(B_R) \;\sim\; \frac{\pi^2}{4R^2} \qquad (R \to \infty).
		\end{equation}
		Remarkably, this asymptotic behavior coincides exactly with the universal Barta-type lower bound derived in Corollary~\ref{cor:barta} (with $\pi/2 \approx 1.57$). This demonstrates that our coordinate-independent lower bound $\sigma_p r \ge \pi\hbar/2$ is asymptotically optimal for geometries that degenerate into lower-dimensional effective spaces, capturing the transition from 2D radial symmetry to 1D longitudinal confinement.
	\end{itemize}
	
	\medskip
	\noindent
	\textit{Injectivity Radius.} Since the manifold is topologically $\mathbb{R}^2$ and the radial function $\tanh(r)$ is strictly monotonic, there are no conjugate points along radial geodesics from the origin. Thus, $\operatorname{inj}_{\Sigma}(0) = \infty$, allowing for arbitrarily large geodesic balls as discussed in the asymptotic analysis.

\paragraph{Example 4.4} [Injectivity radius at the Schwarzschild throat]\\
In general there is no simple closed-form expression for the injectivity-radius of a generic Riemannian \(3\)-manifold, and the maximal Schwarzschild slice is no exception. For the purposes of Assumption~\ref{ass:min}, however, one
does not need an explicit formula for \(\mathrm{inj}_{\Sigma}(x)\) at every point \(x\in\Sigma\). The
assumption only requires that the radius \(r\) of the geodesic ball \(B_{\Sigma}(p,r)\) be strictly
smaller than \(\mathrm{inj}_{\Sigma}(p)\) at the chosen center \(p\). It is therefore natural to look for a point where the injectivity radius is smallest: any lower bound obtained from such a ``worst case'' automatically controls all admissible localization radii on the slice. On the time-symmetric Cauchy slice of the maximal Schwarzschild extension this role is played by the Einstein--Rosen throat \cite{CollasKlein:SchwarzschildWormhole}.

More concretely, in the maximal Schwarzschild extension the time-symmetric Cauchy slice
\((\Sigma,h)\) can be written in warped-product form
\[
h = d\ell^{2} + R(\ell)^{2} d\Omega^{2},
\]
where \(\ell\in\mathbb{R}\) is a proper-distance coordinate and \(R(\ell)\) is the area radius of the
\(2\)-spheres \(\{\ell=\mathrm{const}\}\). The minimal \(2\)-sphere (Einstein--Rosen throat / apparent horizon) is located at \(\ell = \ell_{0}\) with \(R(\ell_{0}) = 2M\) and \(R'(\ell_{0}) = 0\); the corresponding \(2\)-sphere is totally geodesic and carries the round metric of radius
 \[
R_{\min} := 2M.
\]
Let \(p\in\Sigma\) lie on this minimal \(2\)-sphere \(S_{H}\). Geodesics in \((\Sigma,h)\) that are tangent
to \(S_{H}\) at \(p\) remain in \(S_{H}\) and coincide with great-circle geodesics on the round
\(2\)-sphere of radius \(R_{\min}\). These great circles have sectional curvature \(1/R_{\min}^{2}\), and
along them the first conjugate point occurs at distance \(\pi R_{\min}\). Hence the conjugate radius
at \(p\) is
\[
\mathrm{conj}_{\Sigma}(p): = \pi R_{\min} = 2\pi M.
\]
Moreover, the shortest nontrivial closed geodesics through \(p\) are precisely these great circles,
of length \(2\pi R_{\min} = 4\pi M\), so that half their length is again \(2\pi M\). By Klingenberg's
injectivity-radius formula,
\[
\mathrm{inj}_{\Sigma}(p)
= \min \Big\{ \mathrm{conj}_{\Sigma}(p), \tfrac{1}{2}\ell_{p} \Big\}
= 2\pi M,
\]
where \(\ell_{p}\) denotes the length of the shortest nontrivial closed geodesic through \(p\).
Thus the injectivity radius at the Schwarzschild throat is finite and of order \(M\). In particular,
for any \(p\in S_{H}\) the geodesic balls \(B_{\Sigma}(p,r)\) satisfy Assumption~\ref{ass:min} for every radius 
\[0 < r < 2\pi M,\] 
and the value \(2\pi M\) can be regarded as a natural lower bound for the injectivity radius scale relevant to our intrinsic Heisenberg-type estimates. Since this quantity is derived exclusively from intrinsic metric data, it serves as a coordinate-independent lower bound for the localization scale.

\paragraph{Example 4.5} [Isotropic maximum-length model of GUP]\\
As a final example we consider the isotropic maximum-length (ML) model introduced in the context of the GUP \cite{Perivolaropoulos:2017rgq,Parsamehr:2024,Skara:2019,BensalemBouaziz:2022,Lawson:2020}. 
Its kinematics is defined algebraically by deformed position-momentum commutation relations on $\mathbb{R}^3$ with Cartesian coordinates $x_i$,
\begin{equation}
	[\hat{x}_i,\hat{x}_j] = 0, 
	\qquad
	[\hat{x}_i,\hat{P}_j] = i\hbar\,\frac{\delta_{ij}}{1-\alpha r^2},
	\qquad
	r^2 := x_1^2+x_2^2+x_3^2,
\end{equation}
where $\alpha>0$ is a deformation parameter. In this algebraic formulation, the factor $(1-\alpha r^2)^{-1}$ is the only modification of the usual flat-space commutator; the position operators remain mutually commuting.
Applying a metric-reconstruction procedure with the general commutator rule \eqref{eq:CommxP}, one reads off the orthonormal frame fields from the mixed commutator:
\begin{equation}
	X_j(x_i) = \frac{\delta_{ij}}{1-\alpha r^2}
	\quad\Longrightarrow\quad
	X_j = \frac{1}{1-\alpha r^2}\,\frac{\partial}{\partial x_j}.
\end{equation}
Requiring $\{X_j\}$ to be orthonormal determines the inverse metric and hence the configuration-space metric of the ML model,
\begin{equation}
	g_{ij}(x) = (1-\alpha r^2)^2\,\delta_{ij},
\end{equation}
so that the configuration space is the conformally flat Riemannian manifold $(\mathbb{R}^3,g_{\mathrm{ML}})$ with conformal factor $1{-}\alpha r^2$. 
In this way, the ``maximum-length'' deformation of the flat-space commutator is reinterpreted as standard quantum mechanics on the curved manifold $(\mathbb{R}^3,g_{\mathrm{ML}})$, with the entire modification encoded in the conformal factor of the metric, the induced symplectic structure, and the momentum commutator.
The Ricci tensor of $g_{\mathrm{ML}}$ in Cartesian coordinates $(x_1,x_2,x_3)$ is
\begin{equation}
	\operatorname{Ric}_{ij}(x)
	= \frac{8\alpha}{(\alpha r^2-1)^2}
	\bigl(\alpha x_i x_j - (\alpha r^2-1)\delta_{ij}\bigr),
	\qquad r^2 = x_1^2+x_2^2+x_3^2
\end{equation}
and therefore the ML metric satisfies the uniform Ricci lower bound
\begin{equation}
	\operatorname{Ric}_{g_{\mathrm{ML}}}(X,X)
	\;\ge\; 8\alpha\, g_{\mathrm{ML}}(X,X),
	\qquad \forall X\in T_x\mathbb{R}^3,\; r(x)<\alpha^{-1/2}.
\end{equation}
In the standard notation $\operatorname{Ric}\ge (n-1)K\,g$ with $n=3$ this corresponds to $K = 4\alpha$.
This Ricci lower bound, together with nonnegative mean curvature of $\partial B_r$, allows one to apply Yang--Ling type inequalities \cite{Yang1999,Ling2006} to obtain 
\begin{equation}\label{eq:ling}
	\lambda^{\mathrm{ML}}_1
	\;\ge\; \frac{\pi^2}{4r^2} + 4\alpha,
\end{equation}
where we have quoted Ling's stronger bound \cite{Ling2006}. 
By way of comparison, Barta's inequality (Corollary~\ref{cor:barta}) yields
\begin{equation}\label{eq:Barta-ML}
	\lambda^{\mathrm{ML}}_1 \;\ge\; \frac{\pi^2}{4r^2}\,.
\end{equation} 
In particular, inequality~\eqref{eq:Barta-ML} captures the universal
$r^{-2}$-behavior of the first Dirichlet eigenvalue, whereas the refined
estimate~\eqref{eq:ling} improves this bound by an additional strictly positive,
curvature-induced term $4\alpha$, which reflects the presence of an intrinsic spectral gap for the ML model. Moreover, for the ML metric the injectivity radius depends on the base point
$p \in \Sigma$ and satisfies the bounds
\begin{equation}\label{eq:inj-ML-bounds}
	0 \;\le\; \mathrm{inj}_{\mathrm{ML}}(p)
	\;\leq\; \frac{2}{3\sqrt{\alpha}},
	\qquad p \in \Sigma.
\end{equation}
Hence the maximal admissible radius of geodesic balls centered at the origin is
$r < \mathrm{inj}_{\mathrm{ML}}(0) = 2/(3\sqrt{\alpha})$, while the injectivity radius
tends to zero as $p$ approaches the Euclidean boundary $r = 1/\sqrt{\alpha}$ of
the ML configuration space.

Taken together, the comparison inequalities for $\lambda^{\mathrm{ML}}_1$ and the injectivity-radius estimate \eqref{eq:inj-ML-bounds} provide a global control of the spectrum in the maximum-length geometry: for every base point ($p$) and every admissible localization radius ($r<\mathrm{inj}^{\mathrm{ML}}(p)$) the first Dirichlet eigenvalue is bounded from below by an $r^{-2}$- term, and the Yang-Ling refinement adds a strictly positive curvature-induced contribution. From the perspective of the GUP phenomenology, however, the most relevant regime is precisely the opposite one, in which the localization scale is small compared to the curvature scale set by ($\alpha^{-1/2}$), so that the deformation away from flat space is weak.

It is therefore natural to complement these global comparison bounds by a local small-ball analysis around the origin in which the metric $g_{\mathrm{ML}}$ may be treated as a mild conformal perturbation of the Euclidean metric. Specializing to geodesic balls centered at the origin with radius ($r$) satisfying $\alpha r^2\ll 1$ and hence $r<\mathrm{inj}_{\mathrm{ML}}(0))$, the Dirichlet problem becomes accessible to a straightforward radial perturbation calculation. In this perturbative regime one finds the asymptotic expansion
\begin{equation}
	\lambda^{\mathrm{ML}}_1(r)
	= \frac{\pi^2}{r^2} - 4\alpha + O(\alpha^2 r^2),
\end{equation}
so that the leading $r^{-2}$ behaviour coincides with that of a geodesic ball in a three-sphere $S^3$ of constant sectional curvature $4\alpha$; the Ricci term $4\alpha$ is the first curvature correction.

\section{Discussion and conclusion}\label{sec:Diss}

The analysis carried out in this work revisits Heisenberg's uncertainty principle in a preparation-based form that is adapted to curved spacetime. Instead of deforming the canonical commutation relations or postulating a specific generalized uncertainty principle, we have kept standard quantum mechanics on a fixed spacetime background and encoded the influence of gravity in the geometry of the localization region itself. Strict von Neumann--Lüders localization to a geodesic ball $B_\Sigma(p,r)$ on a spacelike hypersurface $(\Sigma,h)$ was modeled by Dirichlet data, and the resulting momentum uncertainty was described by a geometric, coordinate-independent variance built from DeWitt-type momentum operators. In this sense, the familiar single-slit thought experiment is lifted from flat space to arbitrary slices of a Lorentzian spacetime.

From this viewpoint, the main intrinsic estimate can be read as a spectral version of Heisenberg's principle. For any normalized state $\psi$, strict localization in the ball enforces the lower bound
\[
\sigma_p(\psi) \;\geq\; \hbar \,\sqrt{\lambda_1}\,,
\]
where $\lambda_1$ is the first Dirichlet eigenvalue of the Laplace--Beltrami operator on the ball. The bound is intrinsic in a strong sense: lapse, shift and extrinsic curvature never enter, and only the induced Riemannian metric on the slice survives. Physically, $\lambda_1$ condenses into a single number all geometric features of the localization region that matter for the trade-off between position and momentum: its proper radius and volume, the shape and mean curvature of its boundary, and the curvature profile in the interior. The kinetic-energy window separating the gradients of $|\psi|$ and of $\psi$ itself shows that a nontrivial part of the momentum uncertainty is already fixed by the modulus of the wave function, while fluctuations of the phase gradient only add a nonnegative excess. The equality conditions make it clear that the lower bound is sharp but typically not saturated on curved balls, reflecting the fact that a globally constant phase gradient is hard to realize in a nontrivial geometry.

In many situations one does not know $\lambda_1$ explicitly, but one does have qualitative control over the boundary of the ball. Under the weak mean-convexity assumption, the boundary-distance Hardy inequality and the vector-field version of Barta's method lead to universal, scale-invariant product bounds
\[
\sigma_p\, r \;\geq\; \frac{\hbar}{2}\,, 
\qquad
\sigma_p\, r \;\geq\; \frac{\pi\hbar}{2}\,,
\]
for all strictly localized states. These inequalities have several noteworthy features. They depend only on the proper radius of the ball and the sign of the radial Laplacian; the detailed curvature in the interior is irrelevant. The Hardy baseline with constant $1/2$ is never attained, in line with the flat-space Hardy inequality, which underscores the robustness of the bound against small perturbations of the state. The Barta-type improvement effectively extracts everything that can be said from mean-convexity alone; any further sharpening of the constant must bring in additional curvature information, for instance via classical eigenvalue comparison theorems or explicit model calculations on constant-curvature spaces.

The examples in Section~\ref{sec:Examples} illustrate how these general statements play out in concrete geometries. On unimodular group manifolds such as the three-dimensional Heisenberg (Nil) geometry, the geometric potential $V$ vanishes even though the scalar curvature is negative, so the intrinsic momentum uncertainty is governed entirely by the Dirichlet spectrum on geodesic balls. On hyperbolic space $H^3$ the non-unimodular Lie algebra generates a positive constant $V$ that coincides with the spectral gap of the Dirichlet Laplacian on large balls. In this case the intrinsic variance separates neatly into a confinement contribution with the familiar $1/r^2$ scaling and a curvature-induced offset. Witten's cigar provides an example with non-constant curvature: here $V(r)$ is a genuine potential that regularizes the effective Hamiltonian near the tip and decays along the asymptotically cylindrical region. Finally, the maximum-length model from the GUP literature shows that a purely algebraic deformation of the commutator can be reinterpreted as standard quantum mechanics on a conformally flat configuration space. In that rereading, the intrinsic lower bound on $\sigma_p$ is dictated by the Dirichlet spectrum of the reconstructed metric, and the deformation parameter plays the role of a curvature scale.

Conceptually, these observations suggest a different perspective on generalized or extended uncertainty relations. Instead of taking them as a fundamental departure from the canonical commutation relations, one may regard them as effective manifestations of the spectral geometry of the localization regions available in a given gravitational setting. The present framework demonstrates that on any spacelike hypersurface the interplay between localization and momentum can be quantified without leaving standard quantum mechanics on curved spacetime: once the induced Riemannian metric and a localization region are specified, the spectral geometry of that region fixes a lower bound on the momentum standard deviation $\sigma_p$. In this sense, the ``minimal lengths'' and ``maximal localization scales'' that often appear in phenomenological GUP models need not be new constants of nature; they can emerge as geometric scales such as the injectivity radius, curvature radius or throat size of the slice.

At the same time, the limitations of the present treatment point toward several natural extensions. Throughout we worked with test particles on a fixed background, so that backreaction of the localized quantum state on the geometry was ignored. In regimes where the momentum spread demanded by the intrinsic bounds approaches the Planck scale, one expects significant gravitational backreaction and possibly horizon formation; a self-consistent treatment would then have to determine the spectrum of $B_\Sigma(p,r)$ jointly with the Einstein equations. A second direction is the generalization from single-particle quantum mechanics to quantum field theory on curved backgrounds. There the notion of localization is more subtle, but the spectral point of view suggests that analogous bounds might be formulated for suitably smeared field operators or for local algebras associated with geodesic balls. Finally, in the context of quantum sensing, the universal product bounds derived here can be interpreted as hard limits on the spatial resolution of probes that are confined to strongly curved regions such as black-hole throats or cosmological horizons, and may serve as design constraints for measurement schemes that try to balance spatial focus against momentum or energy noise.

To summarize, the results obtained in this work recast Heisenberg's uncertainty principle as a statement about the intrinsic spectral geometry of spacelike hypersurfaces in general relativity. Strict localization to a geodesic ball carries an unavoidable spectral cost, quantified by the first Dirichlet eigenvalue and, under weak convexity assumptions, bounded below by universal constants that are independent of the foliation and insensitive to the detailed interior curvature. These intrinsic Heisenberg-type inequalities show that quantum uncertainty and spacetime geometry are intertwined already at the level of single-particle quantum mechanics on curved backgrounds, well before any explicit quantization of the gravitational field is invoked.

Ultimately, this implies that the uncertainty principle ceases to be merely a kinematic restriction of the state vector, but reveals itself as a geometric necessity enforced by the curvature of spacetime itself.

\section*{Acknowledgments}
	The author thanks the spectral geometry community for many classical tools; any errors are the author's responsibility.
	\appendix
	
	\section{Integration-by-parts identity for the DeWitt momenta}\label{app:DeWitt-IBP}
	
	For completeness we sketch the integration-by-parts computation leading to the quadratic-form
	identity \eqref{eq:quadratic-form-identity}. Let $\{X_a\}_{a=1}^n$ be an $h$-orthonormal frame
	on $B_\Sigma(p,r)$, and write $f_a := \mathrm{div}_h X_a$. On scalar wave functions the
	covariant derivative reduces to directional differentiation, $\nabla^h_{X_a}\psi = X_a\psi$, so
	that
	\[
	P_a\psi = - i\hbar\left(X_a\psi + \frac{1}{2} f_a \psi\right), \qquad
	a = 1,\dots,n.
	\]
	For $\psi \in C^\infty_0(B_\Sigma(p,r))$ we have
	\begin{align*}
		\langle P^2\rangle
		&= \sum_{a=1}^n \|P_a\psi\|_{L^2}^2
		= \hbar^2 \sum_{a=1}^n \int_{B_\Sigma(p,r)}
		\bigl|X_a\psi + \tfrac{1}{2} f_a \psi\bigr|^2\,\mathrm{d}\mu_h \\
		&= \hbar^2 \sum_{a=1}^n \int_{B_\Sigma(p,r)}
		\Bigl( |X_a\psi|^2
		+ \frac{1}{4} f_a^2 |\psi|^2
		+ \frac{1}{2} f_a\,X_a(|\psi|^2)
		\Bigr)\,\mathrm{d}\mu_h,
	\end{align*}
	where we used
	$X_a(|\psi|^2) = X_a\psi\,\overline{\psi} + \psi\,X_a\overline{\psi}$ for the cross term. %
	To handle the last term, note that for any smooth scalar field $\varphi$ the divergence identity
	\[
	\mathrm{div}_h(\varphi X_a) = X_a\varphi + f_a \varphi
	\]
	implies, upon choosing $\varphi = |\psi|^2$ and integrating over $B_\Sigma(p,r)$, that
	\[
	\int_{B_\Sigma(p,r)} X_a(|\psi|^2)\,\mathrm{d}\mu_h
	= - \int_{B_\Sigma(p,r)} f_a\,|\psi|^2\,\mathrm{d}\mu_h,
	\]
	because the boundary term vanishes for Dirichlet data ($\psi|_{\partial B_\Sigma(p,r)} = 0$). Applying the same identity to $\varphi = f_a |\psi|^2$ yields
	\[
	\int_{B_\Sigma(p,r)} \bigl( X_a f_a \,|\psi|^2
	+ f_a\,X_a(|\psi|^2)
	+ f_a^2 |\psi|^2
	\bigr)\,\mathrm{d}\mu_h = 0,
	\]
	hence
	\[
	\int_{B_\Sigma(p,r)} f_a\,X_a(|\psi|^2)\,\mathrm{d}\mu_h
	= - \int_{B_\Sigma(p,r)} \bigl(X_a f_a + f_a^2\bigr)\,|\psi|^2\,\mathrm{d}\mu_h.
	\]
	Substituting this back into the expression for $\|P\|_h^2[\psi]$ gives, for each $a$,
	\[
	\int_{B_\Sigma(p,r)}
	\left( \frac{1}{4} f_a^2 |\psi|^2
	+ \frac{1}{2} f_a\,X_a(|\psi|^2)
	\right)\,\mathrm{d}\mu_h
	= - \int_{B_\Sigma(p,r)}
	\left( \frac{1}{2} X_a f_a + \frac{1}{4} f_a^2 \right)
	|\psi|^2\,\mathrm{d}\mu_h.
	\]
	Summing over $a$ and using the pointwise identity
	$\sum_{a=1}^n |X_a\psi|^2 = \|\nabla^h\psi\|_h^2$, we obtain
	\[
	\langle P^2\rangle
	= \hbar^2 \int_{B_\Sigma(p,r)} \|\nabla^h\psi\|_h^2\,\mathrm{d}\mu_h
	- \hbar^2 \int_{B_\Sigma(p,r)}\! V(x)\,|\psi|^2\,\mathrm{d}\mu_h,
	\]
	with $V$ as in \eqref{eq:Vpotential}. Since $C^\infty_0(B_\Sigma(p,r))$ is dense in
	$H^1_0(B_\Sigma(p,r))$ and both sides define continuous quadratic forms on
	$H^1_0(B_\Sigma(p,r))$, the identity extends by continuity to all
	$\psi \in H^1_0(B_\Sigma(p,r))$, which is exactly
	\eqref{eq:quadratic-form-identity}.

\end{document}